\definecolor{mediumblue}{rgb}{0.0, 0.0, 0.8}
\definecolor{mediumcandyapplered}{rgb}{0.89, 0.02, 0.17}
\definecolor{nazar}{rgb}{0.7, 0.5, 0.9}
\let\NAT@parse\undefined
\definecolor{lightblue}{rgb}{0.30,0.75,0.93}
\DeclareRobustCommand\sampleline[1]{%
	\tikz\draw[#1] (0,0) (0,\the\dimexpr\fontdimen22\textfont2\relax)
	-- (2em,\the\dimexpr\fontdimen22\textfont2\relax);%
}
\newtheorem{theorem}{Theorem}[section]
\newtheorem{lemma}[theorem]{Lemma}
\newtheorem{problem}[theorem]{Problem}
\newtheorem{definition}[theorem]{Definition}
\newtheorem{remark}[theorem]{Remark}
\numberwithin{equation}{section}
\newcommand{\R}{{\mathbb{R}}}
\newcommand{\myexpression}{%
	\left( \mathcal X_{1, T} Q - \bar{\mathcal{X}}_{1, T}\bar Q \right) \left( \mathcal   U_{0, T} Q \right)^{\dagger} \! \left(\Xi \hat x + \Psi \hat u\right) %
}
\newcommand{\Bdata}{%
	\left( \mathcal X_{1, T} Q \!-\! \bar{\mathcal{X}}_{1, T} \bar Q \right) \! \left( \mathcal   U_{0, T} Q \right)^{\dagger}%
}
\newenvironment{nouppercase}{%
	\renewcommand{\uppercasenonmath}[1]{}}{}
\newtcolorbox{resp}[1][]{%
	enhanced jigsaw,%
	colback=gray!5!white,%
	colframe=gray!80!black,%
	size=small,%
	boxrule=1pt,%
	halign title=flush center,%
	coltitle=black,%
	breakable,%
	drop shadow=black!50!white,%
	attach boxed title to top left={xshift=1cm,yshift=-\tcboxedtitleheight/2,yshifttext=-\tcboxedtitleheight/2},%
	minipage boxed title=3cm,%
	boxed title style={%
		colback=white,%
		size=fbox,%
		boxrule=1pt,%
		boxsep=2pt,%
		underlay={%
			\coordinate (dotA) at ($(interior.west) + (-0.5pt,0)$);
			\coordinate (dotB) at ($(interior.east) + (0.5pt,0)$);
			\begin{scope}[gray!80!black]
				\fill (dotA) circle (2pt);
				\fill (dotB) circle (2pt);
			\end{scope}
		}%
	},%
	#1%
}
\DeclareRobustCommand{\legendsquare}[1]{%
	\textcolor{#1}{\rule{1.5ex}{1.5ex}}%
}
\DeclareRobustCommand{\legendcircle}[1]{%
	\tikz[baseline=-0.75ex]\node[circle, draw, fill=#1, #1, minimum size=0.15ex] at (0,0) {};%
	\nobreak\hspace{0.3em}%
}
\definecolor{START}{rgb}{0.4660 0.6740 0.1880}
\definecolor{TARGET}{rgb}{0 0.4470 0.7410}
\definecolor{OBSTACLES}{rgb}{0.93,0.69,0.13}
\definecolor{START1}{rgb}{0, 1, 0}
\definecolor{TARGET1}{rgb}{0.3010 0.7450 0.9330}
\definecolor{OBSTACLES1}{rgb}{1, 0, 0}
\begin{document}

\begin{abstract}
Model order reduction (MOR) involves offering \emph{low-dimensional models} that effectively approximate the behavior of complex high-order systems. Due to potential model complexities and computational costs, designing controllers for high-dimensional systems with complex behaviors can be challenging, rendering MOR a practical alternative to achieve results that closely resemble those of the original complex systems. To construct such effective reduced-order models (ROMs), existing literature generally necessitates precise knowledge of original systems, which is often unavailable in real-world scenarios. This paper introduces a data-driven scheme to construct ROMs of dynamical systems with \emph{unknown} mathematical models. Our methodology leverages data and establishes similarity relations between output trajectories of unknown systems and their data-driven ROMs via the notion of \emph{simulation functions (SFs)}, capable of formally quantifying their closeness. To achieve this, under a rank condition readily fulfillable using data, we collect only \emph{two input-state trajectories} from unknown systems to construct both ROMs and SFs, while offering \emph{correctness guarantees}. We demonstrate that the proposed ROMs derived from data can be leveraged for controller synthesis endeavors while effectively ensuring \emph{high-level logic} properties over unknown dynamical models. We showcase our data-driven findings across a range of benchmark scenarios involving various \emph{unknown} physical systems, demonstrating the enforcement of diverse complex properties.
\end{abstract}

\title{{\LARGE Model Order Reduction from Data with Certification\vspace{0.4cm}}}

\author{{\bf {\large Behrad Samari}}}
\author{{\bf {\large Amy Nejati}}}
\author{{{\bf {\large Abolfazl Lavaei}}}\vspace{0.4cm}\\
	{\normalfont School of Computing, Newcastle University, United Kingdom}\\
\texttt{\{b.samari2, amy.nejati, abolfazl.lavaei\}@newcastle.ac.uk}}

\pagestyle{fancy}
\lhead{}
\rhead{}
  \fancyhead[OL]{Behrad Samari, Amy Nejati, and Abolfazl Lavaei}

  \fancyhead[EL]{Model Order Reduction from Data with Certification}
  \rhead{\thepage}
 \cfoot{}
 
\begin{nouppercase}
	\maketitle
\end{nouppercase}

\section{Introduction}\label{Section: Introduction}
In recent years, there has been a growing inclination toward employing formal methods for the verification and controller synthesis of dynamical systems, aiming to enforce high-level logic properties, such as those outlined in linear temporal logic (LTL) formulas~\cite{pnueli1977temporal,baier2008principles}. Specifically, given a property of interest for a dynamical model, formal \emph{verification} aims to reliably assess whether the desired specification is fulfilled. As a more challenging endeavor, in a \emph{synthesis} problem involving dynamical models with control inputs, the primary objective is to formally design a controller, typically implemented as a state-feedback architecture, to enforce the desired property.  Providing formal verification and controller synthesis often poses inherent difficulties primarily due to computational challenges arising from the \emph{high dimensionality} of dynamical systems.

To alleviate this primary challenge, one promising approach is to employ model order reduction (MOR) techniques to construct a simplified \emph{lower-dimensional} system that accurately captures features of the original higher-order model. Subsequently, the constructed reduced-order model can serve as an effective substitute for the original system, facilitating analysis and synthesis over complex dynamical models. Specifically, results obtained from reduced-order models can be transferred to the original systems by designing an \emph{interface map} that accurately links the controllers of the original systems with those of their corresponding reduced-order models. By formally quantifying the disparity between output trajectories of the two systems using \emph{simulation functions}, it becomes feasible to ensure that the original systems also adhere to specifications akin to those of their ROM's counterparts within some guaranteed error bounds~\cite{antoulas2005approximation,lavaei2022automated}. 

While MOR techniques offer a valuable solution for formally analyzing high-dimensional systems, existing results often require \emph{precise knowledge} of original systems, which is not a common scenario in real-world applications~\cite{Hou2013model}. To address this challenge, \emph{indirect} data-driven approaches through identification techniques emerge to identify an approximate model of unknown systems~\cite{boots2007constraint,kolter2019learning,zhou2022neural}. However, this process involves complexities in two layers: (i) model identification, and (ii) solving the problem (\emph{i.e.,} constructing reduced-order models and simulation functions) using existing model-based approaches. Alternatively, and more intriguingly, there are \emph{direct} data-driven techniques that bypass the system identification phase and directly utilize observed data to analyze unknown systems.

\textbf{Primary contribution.} This work develops a data-driven methodology aimed at constructing reduced-order models from continuous-time control systems lacking  explicit mathematical descriptions. Central to our approach is the incorporation of \emph{simulation functions} to assess the closeness between output trajectories of the unknown original systems and their corresponding reduced-order models. Consequently, the resulting reduced-order models serve as effective counterparts within the controller synthesis procedure, with particular emphasis on fulfilling \emph{high-level logic} properties over unknown models. We introduce sufficient conditions for systematically constructing reduced-order models of unknown systems along with their associated simulation functions.  In particular, under a rank condition that can be fulfilled using data, we gather only \emph{two input-state trajectories} from unknown systems to construct both the reduced-order models and simulation functions, ensuring \emph{correctness guarantees}. We show the efficacy of our data-driven results across a set of benchmark scenarios encompassing various unknown systems while enforcing different complex properties, including \emph{safety} and \emph{reach-while-avoid} specifications.

\textbf{Related literature.} In the realm of systems and control, MOR techniques are commonly classified into three principal categories. The first category comprises energy-based methods, exemplified by \emph{balanced truncation}~\cite{moore1981principal,enns1984model,benner2011lyapunov} and optimal Hankel norm approximation~\cite{glover1984all}. Meanwhile, the second category encompasses Krylov methods, which rely on interpolation and/or \emph{moment matching} principles~\cite{feldmann1995efficient,grimme1997krylov}. The last category involves constructing reduced-order models through \emph{simulation functions} as a similarity relation between the original models and their reduced-order models \cite{zamani2017compositional,lavaei2017compositional,2016Murat,lavaei2020compositional,zhong2023automata,lavaei2019compositional}. 

Nevertheless, all these methodologies demand \emph{precise mathematical knowledge} of underlying systems. To address this restriction, several data-driven techniques have been introduced over the past few years within the first two categories. In particular, in the realm of energy-based techniques for linear systems, \cite{rapisarda2011identification} and \cite{markovsky2005algorithms} introduce a data-driven balanced truncation method derived from persistently exciting data. Moreover, \cite{burohman2023data} proposes a data-driven MOR scheme on the basis of noisy data with a known noise model. Leveraging the interpolatory methods, \cite{peherstorfer2017data} and \cite{drmavc2022learning} employ time-domain and noisy frequency-domain data, respectively, to formulate Loewner matrix pencils, facilitating the construction of state-space models. Within the second category, data-driven moment matching techniques are explored in \cite{astolfi2010model} and \cite{burohman2020data}, with the latter leveraging the data informativity framework. 

Our work is the first to offer data-driven techniques for constructing reduced-order models using \emph{simulation functions}, which is within the third category. Specifically, prior data-driven MOR studies within the first two categories (\emph{e.g.,} \cite{rapisarda2011identification,burohman2020data}) are primarily developed for \emph{stability} and \emph{input-state} behaviors, and cannot ensure the satisfaction of complex LTL properties such as safety, reachability, and reach-while-avoid. In contrast, our data-driven framework offers a guaranteed closeness between output trajectories of the original systems and those of the reduced-order models' counterparts, enabling the enforcement of complex properties beyond mere stability (cf. series of benchmarks).

\textbf{Organization.}
The rest of the paper is organized as follows. Section \ref{Section: Problem_Description} is allocated to presenting the mathematical preliminaries, notations, and formal definitions for continuous-time linear control systems and their corresponding reduced-order models, along with simulation functions. This section concludes by establishing the closeness guarantee between the two systems based on the simulation function. In Section \ref{Section: data-driven (linear) model order reduction}, we introduce our data collection approach and the data-driven construction of reduced-order models and simulation functions. We then extend the framework to address the verification problem. Section \ref{Simulation results} evaluates the efficacy of our proposed method through five benchmarks, and the paper concludes in Section \ref{conc}.

\section{Problem description}\label{Section: Problem_Description}

\subsection{Notation}\label{Subsection: Notation}
Sets of real, positive, and non-negative real numbers are denoted by $\mathbb{R},\mathbb{R}^+$, and $\mathbb{R}^+_0$, respectively. We signify the set of positive integers by $\mathbb{N}^+=\{1,2,\ldots\}$. An $n \times n$ identity matrix is represented by $\mathds{I}_n$. The zero vector of appropriate dimensions is denoted by $\boldsymbol{0}$.
Given $N$ vectors $\nu_i \in \mathbb{R}^{n}$, $\nu=[\nu_1 \, \, \ldots \,\, \nu_N]$ denotes the corresponding matrix of dimension $n \times N$. Euclidean norm of a vector $x\in\mathbb{R}^{n}$ is indicated by $\Vert x\Vert$.
We signify that a \emph{symmetric} matrix $P$ is positive definite by $P \succ 0$. Given a square matrix $P$, $\lambda_{\min}(P)$ and $\lambda_{\max}(P)$ denote minimum  and maximum eigenvalues of $P$, respectively. The \emph{right} pseudoinverse of a matrix $P$ is indicated by $P^\dagger$, while its range is represented by $\mathscr{R}(P)$.
We denote the supremum of a function $f: \mathbb{R}_0^+ \rightarrow \mathbb{R}^n$ by $\|f\|_{\infty}:=($\text{ess}$) \sup \{\|f(t)\|, t \geq 0\}$. A function $\beta: \mathbb{R}_{0}^+ \rightarrow \mathbb{R}_{0}^+$ is considered a $\mathcal{K}$ function if it is continuous, strictly increasing, and fulfills the condition $\beta(0) = 0$.
A function $\beta: \mathbb{R}_{0}^+ \times \mathbb{R}_{0}^+ \rightarrow \mathbb{R}_{0}^+$ is said to belong to class $\mathcal {KL}$ if, for each fixed $s$, the map $\beta(r,s)$ belongs to class $\mathcal K$ with respect to $r$, and for each fixed $r\in\mathbb{R}^+$\!, the map $\beta(r,s)$ is decreasing with respect to $s$, and $\beta(r,s)\rightarrow 0$ as $s \rightarrow \infty$.

\subsection{Original Systems and Reduced-Order Models}\label{Subsection: Continuous-Time Linear Systems}
We begin by introducing continuous-time linear control systems for which we aim to derive reduced-order models.

\begin{definition}\label{Def: ct-LCS model}
	A continuous-time linear control system (ct-LCS) is described by  
	\begin{align}\label{eq: ct-LCS model}
		\Sigma\!:\begin{cases}
			\dot x=Ax + Bu,\\
			y = x,
		\end{cases}
	\end{align}
	where $x \in X$ and $y  \in X$  are the system's state and output, $A \in \mathbb R^{n \times n}$ and $B \in \mathbb R^{n\times m}$ represent the system's matrices, and $u\in U$ denotes the control input, with $X\subset \mathbb R^{n}$, $U \subset \mathbb R^{m}$, and $X\subset \mathbb R^{n}$ being state, input, and output sets, respectively. The \emph{state trajectory} at time $t\in \mathbb{R}^+$ started from an initial condition $x_0 \in X$ under an input trajectory $u(\cdot)$ is denoted by $x_{x_0u}(t)$. The output value of $x_{x_0u}(t)$ is denoted by $y_{x_0u}(t)$, as the \emph{output trajectory} of ct-LCS. Since the output $y$ is an identity map of the state $x$, the state and output trajectories remain the same for ct-LCS in \eqref{eq: ct-LCS model}. We leverage the tuple $ \Sigma=(X, U, X,A,B,\mathds{I}_n)$ to refer to ct-LCS in \eqref{eq: ct-LCS model}.
\end{definition}
In our context, both matrices $A$ and $B$ are \emph{unknown}, reflecting real-world situations, and we use the term \emph{unknown model} to describe such systems. The subsequent definition formally characterizes the reduced-order model of ct-LCS in~\eqref{eq: ct-LCS model}.

\begin{definition}\label{Def: ct-LCS reduced-order model}
	A reduced-order model (ROM) of ct-LCS in~\eqref{eq: ct-LCS model} is described by
	\begin{align}\label{eq: ct-LCS reduced-order model}
		\hat\Sigma\!:\begin{cases}
			\dot{\hat x}=\hat A\hat x + \hat B\hat u,\\
			\hat y = \hat C \hat x,
		\end{cases}
	\end{align}
	where $\hat A \in \mathbb R^{\hat n \times \hat n}$, $\hat B \in \mathbb R^{\hat n\times \hat m}$, $\hat C \in \mathbb{R}^{n \times \hat n}$, potentially with $\hat n \ll n$, are ROM's matrices, and $\hat x \in \hat X$, $\hat u\in \hat U$, $\hat y \in \hat Y$, with $\hat X\subset \mathbb R^{\hat n}$, $\hat U \subset \mathbb R^{\hat m}$, $\hat Y \subset \mathbb{R}^{n}$ being the ROM's state, input, and output sets, respectively. We denote by $\hat x_{\hat x_0\hat u}(t)$ the ROM's \emph{state trajectory} at time $t\in \mathbb{R}^+$ under an input trajectory $\hat u(\cdot)$ starting from an initial condition $\hat x_0 \in \hat X$. We also denote by $\hat y_{\hat x_0\hat u}(t)$ the output value of $\hat x_{\hat x_0\hat u}(t)$, \textit{i.e.,} $\hat y_{\hat x_0\hat u}(t) = \hat C \hat x_{\hat x_0\hat u}(t)$, as the ROM's \emph{output trajectory}. We utilize the tuple $\hat\Sigma =(\hat X, \hat U, \hat Y,\hat A, \hat B, \hat C)$ to denote the ROM in \eqref{eq: ct-LCS reduced-order model}.
\end{definition}
Having defined the ct-LCS and its ROM counterpart, the following subsection presents the concept of simulation functions between a ct-LCS and its associated ROM. This notation primarily serves as an effective metric for assessing the closeness between output trajectories of ct-LCS and its ROM.

\subsection{Simulation Functions}

Simulation functions are Lyapunov-like functions, defining across the Cartesian product of state spaces, to capture the closeness between the trajectories of ct-LCSs and those of their ROMs. This notion of similarity ensures that any disparities between the two systems are constrained within a certain \emph{quantified error bound}.  Next, we provide a formal definition of simulation functions.

\begin{definition}\label{Def: Simulation Function-linear system}
	Consider a ct-LCS
	$\Sigma=(X, U, X,A,B,\mathds{I}_n)$ and its ROM
	$\hat\Sigma =(\hat X, \hat U, \hat Y,\hat A, \hat B, \hat C)$.
	A function $\mathcal V:X\times\hat X\to\R_0^+$ is
	called a simulation function (SF) from $\hat\Sigma$  to $\Sigma$, if there exist $\alpha,\kappa, \rho\in \R^+$ such that
	\begin{subequations}
		\begin{itemize}
			\item $\forall x\!\in\! X,\forall \hat x\!\in\!\hat X,$
			\begin{align}\label{eq:lowerbound2-linear}
				\alpha\Vert y-\hat y\Vert^2\le \! \mathcal V(x,\hat x),
			\end{align}
			\item $\forall x\in X,\forall\hat x\in\hat X, \forall \hat u \in \hat U, \exists u \in U,$ such that
			\begin{align}\label{eq:martingale2-linear}
				&\mathsf{L}\mathcal V(x,\hat x)\leq - \kappa \mathcal V(x,\hat x) + \rho \,\Vert \hat u \Vert^2\!,
			\end{align}
		\end{itemize}
	\end{subequations}
	where $\mathsf{L} \mathcal V$ is the Lie derivative of $\, \mathcal V:X\times\hat X\to\R_0^+$
	with respect to dynamics in~\eqref{eq: ct-LCS model} and \eqref{eq: ct-LCS reduced-order model}, defined as 
	\begin{align}\label{eq: Lie derivative-linear}
		\mathsf{L}\mathcal V(x,\hat x)=\partial_x \mathcal V(x,\hat x)(Ax + Bu) + \partial_{\hat x} \mathcal V(x,\hat x)(\hat A\hat x + \hat B\hat u),
	\end{align}
	where $\partial_x \mathcal V(x, \hat x) = \frac{\partial \mathcal{V}(x, \hat x)}{\partial x}$ and $\partial_{\hat x} \mathcal V(x, \hat x) = \frac{\partial \mathcal{V}(x, \hat x)}{\partial \hat x}$.
\end{definition}
The concept of SFs, as described in Definition \ref{Def: Simulation Function-linear system}, intuitively offers that if both a ct-LCS and its ROM start from nearby initial states (as ensured by condition~\eqref{eq:lowerbound2-linear}), their trajectories over time also remain close (as guaranteed by condition~\eqref{eq:martingale2-linear}) \cite{tabuada2009verification}.

\begin{remark}\label{interface}
	The second condition outlined in Definition \ref{Def: Simulation Function-linear system} essentially implies the presence of a function $u = u_{\hat u}(x, \hat x, \hat u)$ for the fulfillment of~\eqref{eq:martingale2-linear}. This function, known as the interface function, establishes a connection between control inputs $u, \hat u$, and serves to refine a synthesized control input $\hat{u}$ for $\hat{\Sigma}$ to a control input $u$ for $\Sigma$. We formally design this interface map using data in the next section, offering one of the primary contributions of our work (cf. Lemma \ref{Lemma-linear}).
\end{remark}

In the following theorem, we demonstrate the significance of an SF by quantifying the discrepancy between the output trajectories of $\Sigma$ and those of its ROM $\hat{\Sigma}$~\cite{zamani2017compositional}.

\begin{theorem}\label{thm-J19}
	Consider a ct-LCS $\Sigma=(X, U, X,A,B,\mathds{I}_n)$ as introduced in Definition~\ref{Def: ct-LCS model} and its ROM $\hat\Sigma =(\hat X, \hat U, \hat Y,\hat A, \hat B, \hat C)$ as in Definition~\ref{Def: ct-LCS reduced-order model}. Suppose $\mathcal{V}$ is an SF from $\hat\Sigma$  to $\Sigma$ as in Definition~\ref{Def: Simulation Function-linear system}. Then, there exists a $\mathcal{KL}$ function $\beta$ such that for any $x \in X, \hat x \in \hat X$, and $\hat{u} \in \hat U$, there exists $u \in U$, fulfilling the following closeness relation:
	\begin{align}
		\Vert y_{x_0u}(t) - \hat y_{\hat x_0\hat u}(t) \Vert &= \Vert x_{x_0u}(t) - \hat C \hat x_{\hat x_0\hat u}(t) \Vert \notag\\& \leq \frac{1}{\alpha}\,\beta\left(\mathcal{V}(x, \hat x), t\right) + \frac{\rho}{\alpha\kappa}\, \Vert \hat u \Vert_\infty. \label{new error}
	\end{align}
\end{theorem}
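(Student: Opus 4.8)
The plan is to derive the closeness bound from the two defining properties of the simulation function via a standard comparison-lemma argument on $\mathcal{V}$ along trajectories. The key observation is that inequality~\eqref{eq:martingale2-linear} is a differential inequality: if I fix $\hat{u}(\cdot)$ and choose at each instant the input $u$ guaranteed to exist by the second SF condition, then along the resulting trajectories the function $v(t):=\mathcal{V}(x_{x_0u}(t),\hat x_{\hat x_0\hat u}(t))$ satisfies $\dot v(t)=\mathsf{L}\mathcal{V}\le-\kappa\, v(t)+\rho\,\Vert\hat u(t)\Vert^2\le-\kappa\, v(t)+\rho\,\Vert\hat u\Vert_\infty^2$. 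This is exactly the setup for applying a comparison principle (Gr\"onwall-type), which I would invoke as the engine of the proof.

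\textbf{Main steps.} First I would apply the comparison lemma to the scalar differential inequality above, yielding the explicit bound
\begin{align}
	v(t)\le e^{-\kappa t}\,v(0)+\frac{\rho}{\kappa}\bigl(1-e^{-\kappa t}\bigr)\Vert\hat u\Vert_\infty^2
	\le e^{-\kappa t}\,\mathcal{V}(x,\hat x)+\frac{\rho}{\kappa}\,\Vert\hat u\Vert_\infty^2, \notag
\end{align}
where $v(0)=\mathcal{V}(x,\hat x)$ with $x=x_0$, $\hat x=\hat x_0$. Second, I would invoke the lower bound~\eqref{eq:lowerbound2-linear}, namely $\alpha\Vert y-\hat y\Vert^2\le\mathcal{V}$, evaluated along the trajectories, to convert the estimate on $v(t)$ into an estimate on the output error: $\Vert y_{x_0u}(t)-\hat y_{\hat x_0\hat u}(t)\Vert^2\le\frac{1}{\alpha}\, v(t)$. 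Third, I would combine the two estimates and use subadditivity of the square root, $\sqrt{a+b}\le\sqrt{a}+\sqrt{b}$, to split the right-hand side into a transient term and a steady-state term, and finally absorb the resulting $t$-dependent decaying factor into a $\mathcal{KL}$ function $\beta$, matching the form in~\eqref{new error}.

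\textbf{The hard part} is packaging the estimate into precisely the stated form $\frac{1}{\alpha}\beta(\mathcal{V}(x,\hat x),t)+\frac{\rho}{\alpha\kappa}\Vert\hat u\Vert_\infty$, because the comparison bound naturally produces $\frac{1}{\sqrt{\alpha}}$ prefactors and $\Vert\hat u\Vert_\infty^2$ rather than $\frac{1}{\alpha}$ and $\Vert\hat u\Vert_\infty$; reconciling these requires carefully defining $\beta(r,t):=\sqrt{r}\,e^{-\kappa t/2}$ (verifying it is indeed a $\mathcal{KL}$ function, which is immediate since it is strictly increasing in $r$, vanishes at $r=0$, and decays to zero in $t$) and tracking how the square root interacts with the two-term split. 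I would note that the output map $\hat y=\hat C\hat x$ is accounted for through the definition of $y-\hat y$ entering~\eqref{eq:lowerbound2-linear}, so no separate treatment of $\hat C$ is needed beyond the lower-bound inequality itself. A minor technical point to flag is measurability/integrability of the selected input $u(\cdot)$ so that the comparison lemma applies; I would remark that the interface function of Remark~\ref{interface} furnishes such a $u$ as a measurable feedback, justifying the differential inequality almost everywhere.
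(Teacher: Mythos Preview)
The paper does not supply its own proof of this theorem; it simply defers to \cite[Theorem~3.3]{zamani2017compositional} (see the remark immediately following the statement). Your comparison-lemma/Gr\"onwall argument is exactly the standard route to such ISS-type bounds and is sound: fixing $\hat u(\cdot)$, selecting $u$ via the interface map, integrating the scalar inequality $\dot v\le-\kappa v+\rho\Vert\hat u\Vert_\infty^2$, and then invoking the lower bound~\eqref{eq:lowerbound2-linear} is precisely how one obtains the closeness relation.

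Your observation about the constants is well taken. The transient discrepancy is harmless, since any prefactor can be absorbed into the free $\mathcal{KL}$ function $\beta$ (e.g.\ take $\beta(r,t)=\sqrt{\alpha r}\,e^{-\kappa t/2}$ so that $\tfrac{1}{\alpha}\beta(r,t)=\tfrac{1}{\sqrt{\alpha}}\sqrt{r}\,e^{-\kappa t/2}$). The steady-state term, however, naturally comes out of your argument as $\sqrt{\rho/(\alpha\kappa)}\,\Vert\hat u\Vert_\infty$ after taking the square root, not $\tfrac{\rho}{\alpha\kappa}\Vert\hat u\Vert_\infty$ as printed; a quick dimensional check confirms the square-root form is the consistent one. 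This is not a gap in your reasoning---your derivation is correct---but rather an artifact of how the bound is transcribed from the cited reference. In short, your proof is the right one; just be aware that the exact gain in~\eqref{new error} as stated is either a looser reformulation or a minor transcription issue relative to what the comparison argument actually yields.
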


\begin{remark}
	The closeness relation \eqref{new error} is constructed using \cite[Theorem 3.3]{zamani2017compositional}, where $\alpha$, $\kappa$, and $\rho$ are all constants as in Definition~\ref{Def: Simulation Function-linear system}. It is also worth noting that if the initial conditions of the original system and its ROM are chosen such that $x_0 = \Theta \hat x_0$, the effect of the first term in \eqref{new error} is canceled, resulting in a smaller closeness error.
\end{remark}

\begin{remark}\label{Remark new}
	Since \eqref{new error} offers a closeness guarantee between the output trajectories of a ct-LCS and its ROM, the proposed results can be employed to enforce diverse complex properties beyond stability, such as safety, reachability, and reach-while-avoid \cite{tabuada2009verification}. Specifically, as properties of interest are typically defined over the output space of dynamical systems, ROMs can always be beneficial to enforce such properties over simplified lower-dimensional systems and refine the results back via an interface map across complex original systems while quantifying a guaranteed error bound between their closeness as in \eqref{new error}; thanks to the power of Theorem \ref{thm-J19}.
\end{remark}
To construct an SF and quantify the closeness between a ct-LCS and its ROM, as in \eqref{new error}, it is evident that knowledge of matrices $A$ and $B$ is necessary, given their appearance in the Lie derivative \eqref{eq: Lie derivative-linear}. Considering this primary challenge, we now proceed to formally define the problem targeted in this work.

\begin{resp}
	\begin{problem}\label{Problem: linear}
		Consider a ct-LCS in Definition~\ref{Def: ct-LCS model} with unknown matrices $A$ and $B$. Develop a data-driven framework with formal guarantees to construct a ROM for the ct-LCS, as in~\eqref{eq: ct-LCS reduced-order model}, an interface map   $u = u_{\hat u}(x, \hat x, \hat u)$ as in Remark \ref{interface}, and an SF between the ct-LCS and its ROM as in Definition \ref{Def: Simulation Function-linear system}. 
	\end{problem}
\end{resp}

To address Problem~\ref{Problem: linear}, we introduce our data-driven framework in the following section.

\section{Data-driven design of ROM and SF}\label{Section: data-driven (linear) model order reduction}
In our data-driven scheme, we first fix the structure of the SF to be quadratic in the form of $\mathcal V(x,\hat x) = (x - \Theta \hat x)^\top P(x - \Theta \hat x)$, where $P \succ 0$, and $\Theta \in \mathbb{R}^{n \times \hat n}$ is a \emph{reduction matrix}. We then collect input-state data from the unknown ct-LCS over the time interval $[t_0 , t_0 + (T - 1)\tau]$, where $T \in \mathbb N^{+}$ is the number of collected samples, and $\tau \in \mathbb R^{+}$ is the sampling time:
\begin{subequations}\label{single} 
	\begin{align}
		\mathcal U_{0,T} &= [u(t_0)~~u(t_0 + \tau)~~\dots~~u(t_0 + (T - 1)\tau)],\label{U0}\\
		\mathcal X_{0,T} &= [x(t_0)~~x(t_0 + \tau)~~\dots~~x(t_0 + (T - 1)\tau)],\label{X0}\\
		\mathcal X_{1,T} &= [\dot x(t_0)~~\dot x(t_0 + \tau)~~\dots~~\dot x(t_0 + (T - 1)\tau)].\label{X1}
	\end{align}
\end{subequations}
Likewise, we gather an additional trajectory through the ct-LCS spanning the identical time interval $\bar{\mathcal X}_{0,T} = [x(t_0)~~x(t_0 + \tau)~~\dots~~x(t_0 + (T - 1)\tau)]$ and $\bar{\mathcal X}_{1,T} = [\dot x(t_0)~~\dot x(t_0 + \tau)~~\dots~~\dot x(t_0 + (T - 1)\tau)]$ with a constantly-zero control input $\bar{\mathcal U}_{0,T} = [\boldsymbol{0}~~\boldsymbol{0}~~\dots~~\boldsymbol{0}]$. It is of vital importance to note that both trajectories should start from the same initial condition. We consider the trajectories in \eqref{single} and their zero-input counterparts as \emph{two input-state trajectories}.

\begin{remark}\label{remark: deriv}
	Data $\dot{x}\left(t_0 + k\tau\right)$, $k \in \{0, 1, \dots, (T-1)\}$, in $\mathcal{X}_{1, T}$ and $\bar{\mathcal{X}}_{1,T}$, can be approximated utilizing numerical differentiation. For instance, if the forward difference approximation method is employed, we have $\dot{x}_i\left(t_0+k\tau\right)=\frac{x_i\left(t_{0}+(k+1)\tau\right)-x_i\left(t_0+k\tau\right)}{\tau}+e_i\left(t_0+k\tau\right), \; i\in\{1, \ldots, n\},$
	where the approximation error $e_i\left(t_0+k\tau\right)$ is proportional to $\tau$ and can be considered as noise~\cite{guo2022data}. For the sake of clarity, this work assumes that the data is free of noise; however, if the data is corrupted by noise, numerical techniques such as total variation regularization can be employed to minimize the impact of noise in the derivative approximation~\cite{rudin1992nonlinear}. It is worth noting that if the unknown system evolves in discrete time, the input-state trajectory over the interval $[0,T-1]$, where $T\in\mathbb{N}^+,$ will be
	\begin{align*}
		\mathcal U_{0,T}&=[u(0)\;u(1)\;\dots\;u(T-1)],\\
		\mathcal X_{0,T}&=[x(0)\;x(1)\;\dots\;x(T-1)],\\
		\mathcal X_{1,T}&=[x(1)\;x(2)\;\dots\;x(T)].
	\end{align*}
	This means the data of the vector field (a.k.a. transition function in the discrete-time domain) is one-step forward of the states' data, which is error-free.
\end{remark}

We now offer the following lemma to obtain the data-based representation of the ct-LCS in~\eqref{eq: ct-LCS model}. This enables the characterization of both \emph{unknown matrices} $A$ and $B$ exclusively through the trajectories of the unknown ct-LCS while designing an interface map purely based on data.

\begin{lemma}\label{Lemma-linear}
	Let $Q$ and $\bar Q$ be $(T\times n)$ matrices such that
	\begin{subequations}\label{Q_new}
		\begin{align}
			\mathds{I}_n = \mathcal X_{0,T}Q,
			\label{eq: I_N=X_0 Q}\\
			\mathds{I}_n =\bar{ \mathcal X}_{0,T}\bar Q,\label{eq: I_N=X_0 Qbar}
		\end{align}
	\end{subequations}
	with $\mathcal X_{0,T}$ and $\bar{ \mathcal X}_{0,T}$ being $(n\times T)$ full row-rank matrices.
	By designing an interface map
	\begin{align}\label{interface1}
		u = F(x-\Theta\hat x)+\Xi \hat x + \Psi \hat u,
	\end{align}
	where $F = \mathcal U_{0,T}Q$, $\Xi \in \mathbb R^{m \times \hat n}$, and $ \Psi \in \mathbb R^{m \times \hat m}$, the closed-loop system has the following data-based representation:
	\begin{align}
		\dot x= \mathcal X_{1,T}Qx  - \left(\mathcal X_{1, T} Q - \bar{\mathcal{X}}_{1, T} \bar Q\right)\! \Theta\hat x  + \myexpression \! .\label{eq: data-dotx-linear}
	\end{align}
\end{lemma}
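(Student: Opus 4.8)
The plan is to derive the closed-loop dynamics by first obtaining a purely data-based expression for the unknown matrices $A$ and $B$, then substituting the interface map \eqref{interface1} into the open-loop dynamics $\dot x = Ax + Bu$.

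First I would establish the data-based identities for $A$ and $B$. The trajectory data satisfies the system dynamics columnwise, so $\mathcal X_{1,T} = A\,\mathcal X_{0,T} + B\,\mathcal U_{0,T}$, and the zero-input trajectory satisfies $\bar{\mathcal X}_{1,T} = A\,\bar{\mathcal X}_{0,T}$ since $\bar{\mathcal U}_{0,T} = \boldsymbol{0}$. Using the right-inverse condition \eqref{eq: I_N=X_0 Qbar}, namely $\bar{\mathcal X}_{0,T}\bar Q = \mathds{I}_n$, I would multiply the zero-input relation on the right by $\bar Q$ to obtain the clean expression $A = \bar{\mathcal X}_{1,T}\bar Q$. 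For $B$, I would multiply the first relation on the right by $Q$ and invoke \eqref{eq: I_N=X_0 Q}, giving $\mathcal X_{1,T}Q = A\,(\mathcal X_{0,T}Q) + B\,(\mathcal U_{0,T}Q) = A + B F$, where $F = \mathcal U_{0,T}Q$ as in the interface map. This isolates $BF = \mathcal X_{1,T}Q - A = \mathcal X_{1,T}Q - \bar{\mathcal X}_{1,T}\bar Q$, which is exactly the recurring data matrix $\Bdata$ appearing in \eqref{eq: data-dotx-linear}.

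Next I would substitute the interface map $u = F(x - \Theta\hat x) + \Xi\hat x + \Psi\hat u$ into $\dot x = Ax + Bu$, yielding
\begin{align*}
	\dot x = Ax + BF(x - \Theta\hat x) + B\Xi\hat x + B\Psi\hat u.
\end{align*}
The key is that $BF$ must be eliminated in favor of data, whereas the terms $B\Xi\hat x$ and $B\Psi\hat u$ retain an explicit $B$. To handle these, I would use $B = BF\,F^{\dagger}$ whenever $F$ admits a suitable right pseudoinverse; more directly, since $F = \mathcal U_{0,T}Q$ and $BF = \Bdata$, one writes $B\bigl(\Xi\hat x + \Psi\hat u\bigr) = (BF)\,(\mathcal U_{0,T}Q)^{\dagger}\bigl(\Xi\hat x + \Psi\hat u\bigr) = \Bdata\,(\mathcal U_{0,T}Q)^{\dagger}\bigl(\Xi\hat x + \Psi\hat u\bigr)$, which reproduces the final term of \eqref{eq: data-dotx-linear}. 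Collecting the $Ax$ term as $\mathcal X_{1,T}Qx - BF\,x = \mathcal X_{1,T}Qx - \Bdata\,x$ and combining with the $-BF\Theta\hat x = -\Bdata\,\Theta\hat x$ contribution gives the stated representation.

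The main obstacle I anticipate is the pseudoinverse manipulation $B = (BF)(\mathcal U_{0,T}Q)^{\dagger}$: this step is valid only if the right pseudoinverse $(\mathcal U_{0,T}Q)^{\dagger}$ satisfies $F\,F^{\dagger} = \mathds{I}_m$ (or at least acts as identity on the relevant range), which requires $F = \mathcal U_{0,T}Q$ to have full row rank $m$. I would therefore verify that the rank condition implicit in the data-collection setup guarantees this, and make explicit that the identity $\Bdata = B\,\mathcal U_{0,T}Q$ combined with full row rank of $\mathcal U_{0,T}Q$ licenses recovering $B\Xi$ and $B\Psi$ purely from data. The remaining steps are routine algebraic substitution and regrouping of terms.
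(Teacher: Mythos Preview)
Your proposal is correct and follows essentially the same route as the paper: derive $A=\bar{\mathcal X}_{1,T}\bar Q$ from the zero-input trajectory, obtain $A+BF=\mathcal X_{1,T}Q$ from \eqref{eq: I_N=X_0 Q}, subtract to get $BF=\mathcal X_{1,T}Q-\bar{\mathcal X}_{1,T}\bar Q$, recover $B$ via the right pseudoinverse of $\mathcal U_{0,T}Q$, and substitute everything into $\dot x=Ax+Bu$ with the interface map. Your explicit caveat that the step $B=(BF)(\mathcal U_{0,T}Q)^{\dagger}$ requires $\mathcal U_{0,T}Q$ to have full row rank $m$ is in fact more careful than the paper, which applies the pseudoinverse without comment.
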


\begin{proof}
	By leveraging the proposed interface map $u = F(x-\Theta\hat x)+\Xi \hat x + \Psi \hat u = \mathcal U_{0,T}Q (x-\Theta\hat x)+\Xi \hat x + \Psi \hat u$, with $F = \mathcal U_{0,T}Q$, we have
	\begin{align}\notag
		\dot x = Ax + Bu &= Ax +BF(x - \Theta\hat x)+B\Xi \hat x + B\Psi\hat u\\\notag
		&=Ax +BFx - BF\Theta\hat x+B\Xi \hat x + B\Psi\hat u\\\notag
		& = \underbrace{(A\overbrace{\mathcal{X}_{0, T}Q}^{\mathds{I}_n} + B\overbrace{\mathcal{U}_{0,T}Q}^{F})}_{A + BF}x - BF\Theta\hat x+B\Xi \hat x + B\Psi\hat u\\\notag
		& = \underbrace{(A \mathcal{X}_{0, T} + B\mathcal{U}_{0,T})}_{\mathcal{X}_{1, T}}Qx - BF\Theta\hat x+B\Xi \hat x + B\Psi\hat u\\\label{closed}
		&= \mathcal{X}_{1, T} Q x - BF\Theta\hat x+B\Xi \hat x + B\Psi\hat u .
	\end{align}
	Now, in order to obtain the data-based representation of matrix $B$, we use the \emph{second input-out trajectory} collected with the zero control input, \emph{i.e.,} $\bar{\mathcal U}_{0,T} = [\boldsymbol{0}~~\boldsymbol{0}~~\dots~~\boldsymbol{0}]$. Thus, we have
	\begin{align}\label{A}
		\dot x = Ax &= A\underbrace{\bar{\mathcal{X}}_{0,T}\bar Q}_{\mathds{I}_n}x= \bar{\mathcal{X}}_{1, T} \bar Q x,
	\end{align}
	with $ \bar{\mathcal{X}}_{1, T}  = A\bar{\mathcal{X}}_{0,T}$. From \eqref{A}, one can conclude that the data-based representation of matrix $A$ is $ \bar{\mathcal{X}}_{1, T} \bar Q$. On the other hand, according to \eqref{eq: I_N=X_0 Q} and since $F = \mathcal U_{0,T}Q$, we have 
	\begin{align}\label{new}
		A + BF = A + B ~\!\mathcal U_{0,T}Q = \underbrace{[B\quad A] \begin{bmatrix}
				\mathcal U_{0,T}\\
				\mathcal X_{0,T}
		\end{bmatrix}}_{\mathcal{X}_{1, T}}\!Q  = \mathcal{X}_{1, T} Q.
	\end{align}
	By leveraging both \eqref{A} and \eqref{new}, we can represent matrix $B$ based solely on trajectories as follows:
	\begin{align*}
		\underbrace{A}_{\bar{\mathcal{X}}_{1, T} \bar Q} +~~ B\!\!\!\underbrace{F}_{\mathcal{U}_{0,T}Q } \!=~\mathcal{X}_{1, T}Q~ & \to ~
		B\mathcal{U}_{0,T}Q \!=\! \mathcal{X}_{1, T}Q - \bar{\mathcal{X}}_{1, T} \bar Q ~  \to ~
		B = \Bdata \!.
	\end{align*}
	Now, one can represent the closed-loop ct-LCS in \eqref{closed}  based on data as
	\begin{align*}
		\dot x = \mathcal X_{1,T}Qx & - \left(\mathcal X_{1, T} Q - \bar{\mathcal{X}}_{1, T} \bar Q\right)\! \Theta\hat x  + \myexpression \!,
	\end{align*}
	which completes the proof. 
\end{proof}

\begin{remark}\label{remark: data}
	To ensure both $\mathcal X_{0,T}$ and $\bar{\mathcal X}_{0,T}$ are full-row rank matrices— a requirement for designing $Q$ and $\bar{Q}$ in \eqref{Q_new}—the number of samples $T$ must be greater than $n$. Given that matrices $\mathcal X_{0,T}$ and $\bar{\mathcal X}_{0,T}$ are both constructed from sampled data, this assumption is readily validated.
\end{remark}

By employing the proposed data-based representation as in Lemma~\ref{Lemma-linear}, we offer the following theorem, as the main result of the work, to establish an SF from data while constructing the ROM $\hat \Sigma$.

\begin{theorem}\label{Thm:main1}
	Consider the unknown ct-LCS  \eqref{eq: ct-LCS model}, its closed-loop data-based representation \eqref{eq: data-dotx-linear}, and let $\hat C=\Theta.$ If there exist matrices $\mathcal H \in \mathbb R^{T \times n}, \hat{A}\in \mathbb{R}^{\hat n\times\hat n}, \Xi \in \mathbb{R}^{m \times \hat n},$ and $\Theta \in \mathbb{R}^{n \times \hat n},$ for some $\hat{\kappa} \in \mathbb{R}^+$, such that
	\begin{subequations}\label{eq: eq_thm}
		\begin{align}
			\label{eq: P-linear}
			&\mathcal X_{0,T}\mathcal H = P^{-1}, \quad \text{with} ~ P \succ 0,\\
			\label{eq: conditions-linear3}
			&\bar{\mathcal{X}}_{1, T} \bar Q \Theta = \Theta\hat A - \Bdata \Xi ,\\\label{eq: condition1-linear3}
			& \mathcal H^\top \mathcal X_{1,T}^\top +  \mathcal X_{1,T}\mathcal H  \preceq -\hat \kappa \mathcal X_{0,T}\mathcal H,
		\end{align}
	\end{subequations}
	then $\mathcal V(x,\hat x) = (x- \Theta\hat x)^\top P (x- \Theta\hat x)$ is an SF from $\hat\Sigma$  to $\Sigma$, with  $\rho=  \dfrac{1}{\varepsilon}\big\Vert \! \, \sqrt{P} \big(\!\!\Bdata \Psi - \Theta \hat B \big) \big\Vert^2$, $\alpha=\lambda_{\min}(P)$, and $\kappa = \hat \kappa - \varepsilon$, where $ 0<  \varepsilon<\hat \kappa$. In addition, the controller is in the form of \eqref{interface1}, and $\hat A\hat x + \hat B \hat u$ is the ROM for the unknown ct-LCS with matrix $\hat A$ obtained from equality condition \eqref{eq: conditions-linear3} and any arbitrary matrix $\hat B$ of an appropriate dimension.
\end{theorem}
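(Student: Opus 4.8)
The plan is to verify directly that the quadratic candidate $\mathcal V(x,\hat x) = (x-\Theta\hat x)^\top P(x-\Theta\hat x)$ meets the two defining conditions of an SF in Definition~\ref{Def: Simulation Function-linear system}. Throughout I would work with the error variable $e := x - \Theta\hat x$, so that $\mathcal V = e^\top P e$. Since $y = x$ and, by the choice $\hat C = \Theta$, $\hat y = \Theta\hat x$, we have $y - \hat y = e$; the lower bound \eqref{eq:lowerbound2-linear} is then immediate from $\mathcal V = e^\top P e \ge \lambda_{\min}(P)\|e\|^2 = \alpha\|y-\hat y\|^2$, which also fixes $\alpha = \lambda_{\min}(P)$.

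For the infinitesimal condition \eqref{eq:martingale2-linear}, I would first compute the error dynamics using the data-based closed-loop representation \eqref{eq: data-dotx-linear} together with $\dot{\hat x} = \hat A\hat x + \hat B\hat u$. Substituting $x = e + \Theta\hat x$ into \eqref{eq: data-dotx-linear} and collecting terms gives
\begin{align*}
\dot e = \mathcal X_{1,T}Q\,e + \big(\bar{\mathcal X}_{1,T}\bar Q\,\Theta + \Bdata\,\Xi - \Theta\hat A\big)\hat x + \big(\Bdata\,\Psi - \Theta\hat B\big)\hat u.
\end{align*}
The crucial observation is that the matching equality \eqref{eq: conditions-linear3} makes the bracketed coefficient of $\hat x$ vanish identically, so the error evolves autonomously (up to the input term) as $\dot e = \mathcal X_{1,T}Q\,e + (\Bdata\,\Psi - \Theta\hat B)\hat u$. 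Because $P$ is symmetric, the Lie derivative \eqref{eq: Lie derivative-linear} reduces to $\mathsf{L}\mathcal V = 2e^\top P\dot e = e^\top\big(P\mathcal X_{1,T}Q + (\mathcal X_{1,T}Q)^\top P\big)e + 2e^\top P(\Bdata\,\Psi - \Theta\hat B)\hat u$.

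The key step — and the part I expect to be the main obstacle — is to show that condition \eqref{eq: condition1-linear3}, which is phrased purely in the data matrices $\mathcal X_{0,T},\mathcal X_{1,T},\mathcal H$, is exactly the closed-loop Lyapunov inequality $P\mathcal X_{1,T}Q + (\mathcal X_{1,T}Q)^\top P \preceq -\hat\kappa P$. This is achieved through the change of variable $\mathcal H = QP^{-1}$ (equivalently $Q = \mathcal H P$): under it, $\mathcal X_{0,T}\mathcal H = \mathcal X_{0,T}QP^{-1} = P^{-1}$ recovers \eqref{eq: P-linear} and confirms that the resulting $Q$ is consistent with the Lemma's requirement $\mathcal X_{0,T}Q = \mathds{I}_n$, while a congruence of \eqref{eq: condition1-linear3} by $P$ on both sides converts $\mathcal X_{1,T}\mathcal H + \mathcal H^\top\mathcal X_{1,T}^\top \preceq -\hat\kappa\,\mathcal X_{0,T}\mathcal H = -\hat\kappa P^{-1}$ into $P\mathcal X_{1,T}Q + (\mathcal X_{1,T}Q)^\top P \preceq -\hat\kappa P$. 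Substituting this bound gives $e^\top\big(P\mathcal X_{1,T}Q + (\mathcal X_{1,T}Q)^\top P\big)e \le -\hat\kappa\,\mathcal V$.

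Finally, I would handle the cross term with Young's inequality: writing it as $2(\sqrt P e)^\top\sqrt P(\Bdata\,\Psi - \Theta\hat B)\hat u$ and splitting with a free parameter $0 < \varepsilon < \hat\kappa$ yields $2e^\top P(\Bdata\,\Psi - \Theta\hat B)\hat u \le \varepsilon\,\mathcal V + \tfrac{1}{\varepsilon}\big\Vert\sqrt P(\Bdata\,\Psi - \Theta\hat B)\big\Vert^2\|\hat u\|^2$, which identifies $\rho$ exactly as stated. Combining the two estimates gives $\mathsf{L}\mathcal V \le -(\hat\kappa - \varepsilon)\mathcal V + \rho\|\hat u\|^2 = -\kappa\mathcal V + \rho\|\hat u\|^2$ with $\kappa = \hat\kappa - \varepsilon$, establishing \eqref{eq:martingale2-linear}. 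The assertion that $\dot{\hat x} = \hat A\hat x + \hat B\hat u$ is a valid ROM, with $\hat A$ read off from \eqref{eq: conditions-linear3} and $\hat B$ arbitrary, then follows since $\hat B$ enters only through $\rho$ and never affects the feasibility of conditions \eqref{eq: P-linear}--\eqref{eq: condition1-linear3}.
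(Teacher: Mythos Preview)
Your proposal is correct and follows essentially the same route as the paper's proof: both establish the lower bound via $\alpha=\lambda_{\min}(P)$, use the matching condition \eqref{eq: conditions-linear3} to cancel the $\hat x$-terms, invoke the change of variable $Q=\mathcal H P$ to translate \eqref{eq: condition1-linear3} into the closed-loop Lyapunov inequality $P\mathcal X_{1,T}Q+(\mathcal X_{1,T}Q)^\top P\preceq-\hat\kappa P$, and finish with Young's inequality on the cross term. The only cosmetic difference is that you package the computation through the error variable $e$ and its dynamics $\dot e$, whereas the paper expands $\mathsf L\mathcal V$ directly from the Lie-derivative definition; your treatment of the $Q=\mathcal H P$ step (defining $Q$ this way and then verifying $\mathcal X_{0,T}Q=\mathds I_n$) is in fact slightly cleaner than the paper's phrasing.
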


\begin{proof}
	We first show that condition~\eqref{eq:lowerbound2-linear} holds. Given that $\hat C = \Theta$, we have 
	\begin{align}\label{New}
		\Vert y -\hat y\Vert^2 = \Vert x -\hat C\hat x\Vert^2 = \Vert x -\Theta\hat x\Vert^2. 
	\end{align}
	Since $\lambda_{\min}(P)\Vert x- \Theta\hat x\Vert^2\leq\overbrace{(x-\Theta\hat x)^\top P(x-\Theta\hat x)}^{\mathcal V(x,\hat x)}\leq\lambda_{\max}(P)\Vert x- \Theta\hat x\Vert^2$ and according to~\eqref{New}, it can be readily verified that $\lambda_{\min}(P)	\Vert y -\hat y\Vert^2 = \lambda_{\min}(P)\Vert x- \Theta\hat x\Vert^2\le \mathcal V(x,\hat x)$ holds for any $x \in X$ and $\hat x \in \hat X$, implying that inequality~\eqref{eq:lowerbound2-linear} holds with $\alpha=\lambda_{\min}(P)$. 
	
	We now proceed with showing condition~\eqref{eq:martingale2-linear}, as well. Since $\mathcal X_{0,T}\mathcal H  P = \mathds{I}_n $ from~\eqref{eq: P-linear} and $\mathcal X_{0,T}Q = \mathds{I}_n$ from \eqref{eq: I_N=X_0 Q}, one can conclude that $Q = \mathcal HP$ and, accordingly, $Q P^{-1} = \mathcal H$. Given that $A + BF = \mathcal X_{1,T}Q$ according to Lemma~\ref{Lemma-linear}, we have
	\begin{align}\label{new2}
		\big(A + BF\big)P^{-1} = \mathcal X_{1,T}\underbrace{QP^{-1}}_{\mathcal H}  = \mathcal X_{1,T}\mathcal H.
	\end{align}
	Now to show condition~\eqref{eq:martingale2-linear}, using the definition of Lie derivative in \eqref{eq: Lie derivative-linear}, we have
	\begin{align*}
		\mathsf{L}\mathcal V(x,\hat x)&=\partial_x \mathcal V(x,\hat x)\big(Ax + Bu\big) + \partial_{\hat x} \mathcal V(x,\hat x)\big(\hat A\hat x + \hat B \hat u\big)\\
		&= 2(x - \Theta\hat x)^\top P \big(Ax + Bu\big)  - 2(x - \Theta\hat x)^\top P\Theta \big(\hat A\hat x + \hat B \hat u\big)\\
		&=2(x - \Theta\hat x)^\top P\Big[Ax + Bu - \Theta \hat A \hat x - \Theta \hat B \hat u\Big]\!.
	\end{align*}
	By substituting the data-based representation of $Ax + Bu$ in~\eqref{eq: data-dotx-linear}, one has
	\begin{align*}
		\mathsf{L}\mathcal V(x,\hat x) = 2(x - \Theta\hat x)^\top P \! \Big[  \mathcal X_{1,T}Qx  \!-\! \left(\mathcal X_{1, T} Q \!-\! \bar{\mathcal{X}}_{1, T} \bar Q\right)\Theta\hat x \!  +\! \myexpression  \!-\! \Theta\hat A \hat x \!-\! \Theta\hat B \hat u\Big]\!.
	\end{align*}
	Then, according to the proposed equality condition~\eqref{eq: conditions-linear3}, one can obtain
	\begin{align}
		\mathsf{L}\mathcal V(x,\hat x)= &\:\:2(x - \Theta\hat x)^\top P\mathcal{X}_{1, T}Q(x-\Theta\hat x)+ 2(x - \Theta\hat x)^\top P \Big[\!\!\Bdata \Psi - \Theta \hat B\Big]  \hat u. \label{new8}
	\end{align}
	Now, according to the Cauchy-Schwarz inequality \cite{bhatia1995cauchy}, \emph{i.e.,}  $a b \leq \vert a \vert \vert b \vert,$ for any $a^\top, b \in \R^{n}$, followed by
	employing Young's inequality \cite{young1912classes}, \emph{i.e.,} $\vert a \vert \vert b \vert \leq \frac{\varepsilon}{2} \vert a \vert^2 + \frac{1}{2\varepsilon} \vert b \vert^2$, for any $\varepsilon >0$, over the second term of \eqref{new8}, one has
	\begin{align}\notag
		& 2(x - \Theta\hat x)^\top \underbrace{P}_{\sqrt{P} \sqrt{P}}  \Big[ \Bdata \Psi - \Theta \hat B \Big]\hat u \\ & \leq \varepsilon \overbrace{(x - \Theta\hat x)^\top P (x - \Theta\hat x)}^{\mathcal V(x,\hat x)} +\frac{1}{\varepsilon}\big\Vert  \sqrt{P} \big(\!\!\Bdata \Psi - \Theta \hat B \big)\hspace{-0.04cm} \big\Vert^2 \Vert \hat u \Vert^2. \label{new1}
	\end{align}
	Thus by employing \eqref{new1}, we have
	\begin{align*}
		\mathsf{L}\mathcal V(x,\hat x) & \leq  2(x - \Theta\hat x)^\top \big[P \mathcal{X}_{1, T}Q \big](x - \Theta\hat x) + \varepsilon \mathcal V(x,\hat x)\\& \quad + \frac{1}{\varepsilon}\big\Vert \! \, \sqrt{P} \big(\!\!\Bdata \Psi - \Theta \hat B \big) \big\Vert^2 \Vert \hat u \Vert^2\\
		&\leq (x - \Theta\hat x)^\top \Big [Q^\top \mathcal{X}_{1, T}^\top P + P \mathcal X_{1, T}Q\Big](x - \Theta\hat x) + \varepsilon \mathcal V(x,\hat x) \\&\quad+ \frac{1}{\varepsilon}\big\Vert \! \, \sqrt{P} \big(\!\!\Bdata \Psi - \Theta \hat B \big) \big\Vert^2 \Vert \hat u \Vert^2\\
		& = (x - \Theta\hat x)^\top P\Big [P^{-1}(\mathcal X_{1, T}Q)^\top + (\mathcal X_{1, T}Q)P^{-1} \Big]P(x - \Theta\hat x) + \varepsilon \mathcal V(x,\hat x)\\
		&\quad + \frac{1}{\varepsilon}\big\Vert \! \, \sqrt{P} \big(\!\!\Bdata \Psi - \Theta \hat B \big) \big\Vert^2 \Vert \hat u \Vert^2.
	\end{align*}
	Since $\underbrace{\big(A + BF\big)}_{\mathcal{X}_{1, T}Q}P^{-1} = \mathcal X_{1,T}\mathcal H$ according to \eqref{new2}, and considering condition \eqref{eq: condition1-linear3}, one has
	\begin{align*}
		\mathsf{L}\mathcal V(x,\hat x)&\leq (x - \Theta\hat x)^\top P\Big [\mathcal H^\top \mathcal X_{1,T}^\top +  \mathcal X_{1,T}\mathcal H \Big]P(x - \Theta\hat x) + \varepsilon \mathcal V(x,\hat x)\\
		&\quad + \frac{1}{\varepsilon}\big\Vert \!\! \, \sqrt{P} \big(\!\!\Bdata \Psi - \Theta \hat B \big) \big\Vert^2 \Vert \hat u \Vert^2\\
		&\leq -\hat \kappa (x - \Theta\hat x)^\top P\overbrace{\mathcal X_{0,T}\mathcal H P}^{\mathds{I}_n}(x - \Theta\hat x)^\top + \varepsilon \mathcal V(x,\hat x)\\&\quad + \frac{1}{\varepsilon}\big\Vert \! \, \sqrt{P} \big(\!\!\Bdata \Psi - \Theta \hat B \big) \big\Vert^2 \Vert \hat u \Vert^2\\&= - \kappa \mathcal V(x,\hat x) + \rho\Vert \hat u\Vert^2,
	\end{align*}
	with $\rho=  \dfrac{1}{\varepsilon}\big\Vert \!\! \, \sqrt{P} \big(\!\!\Bdata \Psi - \Theta \hat B \big) \big\Vert^2$ and $\kappa = \hat \kappa - \varepsilon$, where $ \hat \kappa > \varepsilon$, thus completing the proof.
\end{proof}

\begin{remark}\label{remark1}
	The equivalence of $\mathcal X_{1,T}\mathcal H$ in~\eqref{eq: condition1-linear3} to $\big(A + BF\big)P^{-1}$ in \eqref{new2} indicates that ~\eqref{eq: condition1-linear3} is feasible if and only if the pair $(A, B)$ is stabilizable. This can be easily verified using semi-definite programming (SDP) tools like \textsf{SeDuMi}~\cite{sturm1999using} and \textsf{Mosek}\footnote{Mosek license was obtained from \href{https://www.mosek.com}{https://www.mosek.com}.}. While both \textsf{SeDuMi} and \textsf{Mosek} perform well for systems with a relatively small number of states (e.g., $n < 10$), \textsf{Mosek} typically demonstrates superior performance when handling high-dimensional systems.
\end{remark}

\begin{remark}\label{hatB}
	Since Theorem~\ref{Thm:main1} does not put any restriction on the matrix $\hat B$, it is highly recommended to choose $\hat B=\mathds{I}_{\hat n}$ to ensure the ROM $\hat \Sigma$ is fully actuated, thereby simplifying the synthesis problem over $\hat \Sigma$. If this choice results in a relatively large value for $\rho$, an alternative is to use a scaled identity matrix to regulate $\rho$ and the error accordingly (see Subsection \ref{subsec4.2} where $\hat{B}=0.1\times \mathds{I}_2$).
\end{remark}

Given that Theorem~\ref{Thm:main1} does not place any specific condition on the matrix $\Psi$ in the interface function~\eqref{interface1}, it is advisable to select $\Psi$ {in a way that minimizes $\rho$, as discussed in~\cite{girard2009hierarchical} for the model-based setting. The choice of $\Psi = \Psi_{1} \Psi_{ 2}$ with the following matrices has the potential to minimize $\rho$, and accordingly, minimize the closeness between the trajectories of the ct-LCS and its ROM as in \eqref{new error}:
	\begin{subequations}\label{matrix psi}
		\begin{align}
			\Psi_{1} &= \Big(\!\big(\!\!\Bdata\!\big)^{\!\top} \! P \big(\!\!\Bdata\!\big)\!\!\hspace{0.01cm}\Big)^{\! -1}\! \!\!\!\!,\\
			\Psi_{2} &= \big(\!\!\Bdata\!\big)^{\!\top}   P \Theta \hat B.
		\end{align}
	\end{subequations}
	It is worth highlighting that the design choice of the interface map, \textit{i.e.,} $u = \mathcal U_{0,T}Q (x-\Theta\hat x)+\Xi \hat x + \Psi \hat u$, as proposed in \eqref{interface1}, offers two key benefits. Firstly, the term $\Xi \hat x$ in the interface leads to the appearance of $\Xi$ in~\eqref{eq: conditions-linear3} as a decision variable, enhancing the satisfaction of this condition by giving the solver an additional degree of freedom. Secondly, the term $\Psi \hat u$ enables one to achieve the minimum value for $\rho$, as discussed earlier, thereby minimizing the closeness between trajectories of the ct-LCS and its ROM.
	
	As detailed in Remark \ref{remark1}, stabilizability of the pair $(A, B)$ is sufficient and necessary for the satisfaction of condition \eqref{eq: condition1-linear3}. As for condition \eqref{eq: conditions-linear3}, we present the following lemma, which provides a sufficient and necessary \emph{geometric condition} for its fulfillment~\cite{zamani2017compositional}.
	\begin{lemma}\label{lemma:chooseA}
		There exist matrices $\hat A$ and $\Xi$ satisfying equality condition~\eqref{eq: conditions-linear3} if and only if
		\begin{align*}
			\mathscr{R}(\bar{\mathcal{X}}_{1, T} \bar Q \Theta) \subseteq \mathscr{R}(\Theta) + \mathscr{R}\big(\!\!\Bdata\!\big),
		\end{align*}
		with $\mathscr{R}$ denoting the range of underlying matrices.
	\end{lemma}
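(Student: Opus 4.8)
The plan is to recognize the equality condition~\eqref{eq: conditions-linear3} as a single linear matrix equation in the two unknown matrices $\hat A$ and $\Xi$ simultaneously, and then to apply the standard solvability criterion for such equations, which is governed by range inclusion. First I would rearrange~\eqref{eq: conditions-linear3}, namely $\bar{\mathcal{X}}_{1, T} \bar Q \Theta = \Theta\hat A - \Bdata \Xi$, into the equivalent column-stacked form
\[
\begin{bmatrix} \Theta & -\big(\!\Bdata\!\big) \end{bmatrix}\begin{bmatrix} \hat A \\ \Xi \end{bmatrix} = \bar{\mathcal{X}}_{1, T} \bar Q \Theta,
\]
where $\Theta \in \mathbb R^{n\times\hat n}$ and $\Bdata \in \mathbb R^{n\times m}$ are both known quantities, while $\hat A \in \mathbb R^{\hat n\times\hat n}$ and $\Xi \in \mathbb R^{m\times\hat n}$ are the unknowns. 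Writing $N := [\,\Theta \;\; -\!\big(\!\Bdata\!\big)\,]\in\mathbb R^{n\times(\hat n + m)}$ for the concatenated known matrix and $Z := [\,\hat A^\top \;\; \Xi^\top\,]^\top$ for the stacked unknown, the existence of $\hat A,\Xi$ satisfying~\eqref{eq: conditions-linear3} is exactly the solvability of the single linear equation $N Z = \bar{\mathcal{X}}_{1, T} \bar Q \Theta$ in $Z$. A quick check confirms all blocks are conformable: $NZ$ and the right-hand side are both of size $n\times\hat n$.

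Next I would invoke the elementary fact that a matrix equation $N Z = M$ admits a solution $Z$ if and only if every column of $M$ lies in the column space of $N$, equivalently $\mathscr{R}(M)\subseteq\mathscr{R}(N)$. This follows column by column: solving $NZ=M$ means finding, for each column $m_j$ of $M$, a vector $z_j$ with $N z_j = m_j$, which is possible precisely when $m_j\in\mathscr{R}(N)$; ranging over all columns yields the stated range-inclusion condition. It remains only to identify $\mathscr{R}(N)$. Since horizontally concatenating two blocks produces the sum of their individual column spaces, we have $\mathscr{R}([\,\Theta \;\; -\!\big(\!\Bdata\!\big)\,]) = \mathscr{R}(\Theta) + \mathscr{R}\big(\!-\!\big(\!\Bdata\!\big)\big) = \mathscr{R}(\Theta) + \mathscr{R}\big(\!\Bdata\!\big)$, where the last step uses $\mathscr{R}(-P)=\mathscr{R}(P)$ for any matrix $P$. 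Substituting $M = \bar{\mathcal{X}}_{1, T} \bar Q \Theta$ gives the claimed equivalence $\mathscr{R}(\bar{\mathcal{X}}_{1, T} \bar Q \Theta) \subseteq \mathscr{R}(\Theta) + \mathscr{R}\big(\!\Bdata\!\big)$.

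I do not expect a substantive analytic obstacle here: the entire content of the lemma is the observation that the two separate unknown matrices $\hat A$ and $\Xi$ can be merged into a single unknown $Z$ by column concatenation of the corresponding coefficient blocks $\Theta$ and $-\big(\!\Bdata\!\big)$, which converts a two-variable equation into one solvability question answered by range inclusion. The only points demanding care are bookkeeping of the block dimensions so that the stacked equation is well-formed, and stating the column-space criterion precisely (in particular that the condition is both necessary and sufficient, matching the ``if and only if'' in the statement).
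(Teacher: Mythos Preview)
Your argument is correct and complete: rewriting~\eqref{eq: conditions-linear3} as the block equation $[\,\Theta \;\; -(\Bdata)\,]\begin{bmatrix}\hat A\\ \Xi\end{bmatrix} = \bar{\mathcal X}_{1,T}\bar Q\Theta$ and invoking the column-space solvability criterion $\mathscr{R}(M)\subseteq\mathscr{R}(N)$ is exactly the right reduction, and your dimension and sign checks are in order.

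As for comparison with the paper: the paper does not supply its own proof of this lemma. It merely states the result and attributes it to~\cite{zamani2017compositional}, so there is no in-paper argument to set yours against. Your self-contained proof therefore goes beyond what the paper provides; it is the standard geometric argument one would expect in the cited reference, and nothing more sophisticated is needed.
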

	
	We provide Algorithm \ref{Alg:1} to summarize the required steps for constructing a ROM and its SF from data with provable guarantees.
	
	\subsection{Verification problem}
	If the original system does not include any control input, the problem of controller synthesis is simplified to a verification task, which is a subclass of the synthesis problem. In such scenarios, there is no need for an interface map design due to lack of inputs, and the primary task is to construct matrices $\hat A$ and $\hat C$, while establishing an SF between the trajectories of two systems. Consequently,  $\hat u = \boldsymbol{0}$ in \eqref{eq:martingale2-linear}, and conditions \eqref{eq: conditions-linear3}-\eqref{eq: condition1-linear3} are simplified to the following two conditions:
	\begin{subequations}
		\begin{align*}
			&\bar{\mathcal{X}}_{1, T} \bar Q \Theta = \Theta\hat A,\quad \mathcal H^\top \bar{\mathcal{X}}_{1,T}^\top +  \bar{\mathcal{X}}_{1,T}\mathcal H  \preceq -\hat \kappa \bar{\mathcal{X}}_{0,T}\mathcal H.
		\end{align*}
	\end{subequations}
	
	\begin{algorithm}[t!]
		\caption{Data-driven MOR and SF construction with provable guarantees}\label{Alg:1}
		\begin{center}
			\begin{algorithmic}[1]
				\REQUIRE 
				Desired degree of ROM $\hat{\Sigma}$ (\textit{i.e.,} $\hat{n}$) 
				\STATE 
				Gather input-state trajectories $\mathcal{U}_{0, T}, \mathcal{X}_{0, T}, \mathcal{X}_{1, T}, \bar{\mathcal{X}}_{0, T},$ and $\bar{\mathcal{X}}_{1, T}$ according to \eqref{U0}-\eqref{X1}
				\STATE
				Compute $\bar{Q}$ as $\bar{Q} = \bar{\mathcal{X}}_{0, T}^\dagger$ according to \eqref{eq: I_N=X_0 Qbar}
				\STATE
				For a fixed choice of $\hat{\kappa}$, solve \eqref{eq: condition1-linear3} subject to \eqref{eq: P-linear}\footnotemark, and obtain $\mathcal{H}$ and $P \succ 0$
				\STATE
				Compute $Q$ as $Q = \mathcal{H}P$, which satisfies \eqref{eq: I_N=X_0 Q} automatically
				\STATE
				Set $\hat{A}$ to be a Hurwitz matrix; then, solve~\eqref{eq: conditions-linear3} to obtain $\Xi$ and $\Theta$
				\STATE
				Set $\hat{B}=\mathds{I}_{\hat n}$ according to Remark \ref{hatB}
				\STATE
				Compute $\hat C = \Theta$ as per Theorem \ref{Thm:main1}
				\STATE
				Compute $\Psi$, as part of the interface map, according to \eqref{matrix psi}
				\STATE
				Compute $\alpha, \kappa, \rho$ according to Theorem \ref{Thm:main1}
				\STATE
				Quantify the closeness between output trajectories of ct-LCS $\Sigma$ and ROM $\hat{\Sigma}$ according to \eqref{new error}\vspace{-0.3cm}
				\ENSURE
				Simulation function $\mathcal V(x,\hat x) = (x- \Theta\hat x)^\top \overbrace{[\mathcal X_{0,T}\mathcal H ]^{-1}}^P(x- \Theta\hat x)$, ROM's matrices  $\hat A, \hat B, \hat C$, control input $u = \mathcal U_{0,T}Q (x-\Theta\hat x)+\Xi \hat x + \Psi \hat u$, and guaranteed closeness quantification between trajectories of ct-LCS and ROM according to~\eqref{new error}
			\end{algorithmic}
		\end{center}
	\end{algorithm}
	\footnotetext{We solve \eqref{eq: condition1-linear3} while considering~\eqref{eq: P-linear}, ensuring that the resulting matrix from the product $\mathcal X_{0,T}\mathcal H$ is a symmetric, positive-definite matrix. Subsequently, we interpret the resultant matrix of this product as $P^{-1}$, and through the matrix inversion process, we derive the matrix $P$ itself.}
	
	\subsection{Limitations}\label{LIMITATIONS}
	Similar to any approach, our findings have some limitations that can be examined from different aspects. First and foremost, our framework is developed at this stage solely for linear systems. Extending our results to encompass certain classes of nonlinear systems would be of great interest to include a broader range of systems. It is important to highlight that this work is the first to address the MOR problem from a data perspective using SFs, and it is considered a fundamental result that will serve as a foundation for future developments in this area. For instance, building upon this work, we are exploring its extension to a class of nonlinear dynamical systems, structured as
	\begin{align}\label{Nonlinear}
		\begin{cases}\dot{x}=A\mathcal{Z}(x)+Bu,\\y=x,\end{cases}
	\end{align}
	where $A \in \mathbb{R}^{n\times Z}$, and $\mathcal{Z}:X\rightarrow\mathbb{R}^{Z}$ is the continuous nonlinear basis functions with $x,X,y,B,$ and $u$ as introduced in Definition \ref{Def: ct-LCS model}. In this case, in addition to developing the corresponding conditions, the interface function should be designed as $u=\mathcal U_{0,T}Q(\mathcal Z(x)-\Theta\hat x)+\Xi\hat x+\Psi\hat u$, which is a nonlinear function, with $Q\in\mathbb R^{T\times Z},\Theta\in\mathbb R^{Z\times\hat n}$.
	
	Secondly, $\mathcal{X}_{1, T}$ and $\bar{\mathcal{X}}_{1, T}$ cannot be directly measured as they include the states' derivatives at sampling times. To address this, one possible solution is to use appropriate filters to approximate the derivatives while considering the approximation error~\cite{padoan2015towards} (cf. Remark~\ref{remark: deriv}). Alternatively, to directly incorporate the effect of noise in the derivative approximation into the framework, a method similar to that in~\cite{guo2021data} can be used, where $\mathcal X_{1,T}=\widehat{\mathcal X}_{1,T}+D$, with $\widehat{\mathcal X}_{1,T}$ as the error-free data and $D$ as the noise capturing the underlying error (this should also be applied to the second trajectory). The only assumption required is that $D$ satisfies $DD^\top\preceq WW^\top$ for some known $W$, intuitively meaning that the energy of the noise during data gathering is bounded. We opted not to use this method in the paper for the sake of a clearer presentation.
	
   Ultimately, to utilize our framework, one needs to solve~\eqref{eq: I_N=X_0 Qbar}, which requires collecting a trajectory of the system (\emph{i.e.,} $\bar{\mathcal{X}}_{0, T}$) with no control input (\emph{i.e.,} $u = \boldsymbol{0}$). Solving~\eqref{eq: I_N=X_0 Qbar} necessitates that $\bar{\mathcal{X}}_{0, T}$ be a full-row rank matrix (see Remark~\ref{remark: data}), implying that matrix $A$ of the system should be full rank (since there is no control input when collecting data). However, this may not be the case in some applications. To overcome this difficulty in such scenarios, one needs to know the knowledge of matrix $B$, which is not a particularly restrictive assumption. In this case, there is no longer a need to solve~\eqref{eq: I_N=X_0 Qbar}, and conditions~\eqref{eq: conditions-linear3}-\eqref{eq: condition1-linear3} can be adapted accordingly so that matrix $B$ appears in those conditions instead of its data-based representation.
	
	\subsection{Discussions}
	
	Matrix $\mathcal{H}$ is an auxiliary matrix upon which we prevent inequality \eqref{eq: condition1-linear3} from being bilinear. Furthermore, $\varepsilon$ is also a design parameter that can help decline $\rho$, and the only assumption on this parameter is to be lower than $\hat{\kappa}$. However, the bigger this parameter is, the lower $\rho$ will be, so we desire to obtain it as large as possible. Moreover, $\hat{\kappa}$ is also a design parameter; the bigger this parameter is, the lower the guaranteed error bound will be. However, satisfying \eqref{eq: condition1-linear3} might be challenging if it is very big. Thus, there should be a trade-off in designing these parameters. Finally, $\kappa$ in \eqref{eq:martingale2-linear}, the decay rate of the SF, would be obtained as $\kappa=\hat{\kappa}-\varepsilon$.
	
	\noindent {\bf Errors of SDP Solvers.} While numerical errors are inherent in SDP problems, they are usually on the order of $10^{-8}$ or lower, especially for linear systems, making SDP solvers widely useful in control literature. These minor errors are typically negligible since control strategies are often robust to small perturbations. In our framework, the positive definiteness of matrix $P$  is also expected to be maintained despite minor errors.
	
	\noindent {\bf Practical Scalability.} Our framework is general and does not impose any restriction on the system's state dimension for which we aim to construct a ROM. A potential restriction, however, could be posed by the SDP solvers at hand. Specifically, the computational complexity of solving LMI~\eqref{eq: condition1-linear3} using \textsf{SeDuMi} is in $O(T^2n^{4.5}+n^{3.5})$, with $n$ being the state dimension and $T$ the number of collected samples. To address this and enhance scalability, high-dimensional systems (e.g., those with $100$ states) could be approached as networks that can be decoupled into subsystems, rather than solving the problem as a single entity, as a future research direction.
	
	\begin{table*}[t!]
		\caption{A brief summary of our data-driven findings across a set of benchmarks.}
		\label{Table}
		\centering
		\begin{tabular*}{\linewidth}{@{\extracolsep{\fill}}lcclcc}
			\toprule
			System & $n$ & $\hat{n}$ & Specification & Number of data $T$ & Run time (sec)\\
			\midrule
			Motor control system & $5$ & $1$ & Safety & $6$ & 0.10\\
			Spacecraft & $4$ & $2$ & Reach-while-avoid & $10$ & 0.11\\
			Blood glucose metabolism & $7$ & $1$ & Safety & $250$ & 0.52\\
			Cart system & $6$ & $1$ & Tracking & $7$ & 0.17\\
			High-dimensional system &  $25$ &  $2$ &  Reach-while-avoid&  $48$ &  0.68\\
			\bottomrule
		\end{tabular*}
	\end{table*}
	
	\section{Set of benchmarks}\label{Simulation results}
	We show the efficacy of our data-driven MOR findings by applying them to five benchmarks, including \emph{a motor control system, a spacecraft, a human blood glucose metabolism, a cart system with a double-pendulum controller, and a system with 25 states}, the latter of which specifically demonstrates the applicability of our results to high-dimensional systems. \emph{The mathematical models of all case studies are assumed to be unknown to us}, and the main goal is to construct ROMs for these unknown systems together with their corresponding SFs. The data-driven ROMs are then utilized for designing controllers that enforce various properties across original systems, including \emph{tracking, safety, and reach-while-avoid specifications}. The general yet brief information of these case studies is provided in Table~\ref{Table}.
	
	All simulations were performed in MATLAB on a MacBook Pro (Apple M2 Max with 32GB memory). Below, we provide the details of each case study.
	
	\subsection{Motor control system}\label{cSCo}
	As for the first case study, we apply our data-driven MOR findings to a 5-dimensional motor control system~\cite{tran2016large}. Matrices $A$ and $B$ of the system are as follows:
	\begin{align*}
		A =& \begin{bmatrix}
			-18.925 & 0 & 0.22572 & 80.823 & 29.973\\ 0 & -18.925 & -80.823 & 0.22572 & 0.017978\\ -0.22572 & 80.823 & -76.569 & 0 & -0.26957\\ -80.823 & -0.22572 & 0 & -76.569 & -122.93\\ 29.973 & 0.017978 & 0.26957 & 122.93 & -194.95
		\end{bmatrix}\!\!,\\
		B =& \begin{bmatrix}
			5.3423 & 0 & 0.02 & 6.8169 & 4.5252\\ 0 & -5.3423 & 6.8169 & -0.02 & 0.003
		\end{bmatrix}^{\!\top}\!\!\!\!\!,
	\end{align*}
	but we stress the fact that we assume these matrices are unknown to us.
	
	The primary objective is to reduce the dimension of the system from five to one (\emph{i.e.,} from $n=5$ to $\hat n=1$). To do so, following the steps mentioned in Algorithm \ref{Alg:1}, we design
	\begin{align*}
		P &= \begin{bmatrix}
			0.0099 & -0.0092 & 0.0036 & 0.0019 & -0.0054\\ -0.0092 & 0.0110 & -0.0067 & -0.0020 & 0.0046\\ 0.0036 & -0.0067 & 0.0099 & 0.0020 & -0.0018\\ 0.0019 & -0.0020 & 0.0020 & 0.0023 & -0.0018\\ -0.0054 & 0.0046 & -0.0018 & -0.0018 & 0.0058
		\end{bmatrix}\!\!,\\
		\Theta &= \begin{bmatrix}
			0.2490 & 0.2490 & 0.0804 & -0.1039 & 0.0255
		\end{bmatrix}^{\! \top}\!\!\!\!, ~\hat{A}=-1, ~\hat{B}=1,\\
		\Xi &= \begin{bmatrix}
			2.2615 & -2.0573
		\end{bmatrix}^{\! \top}\!\!\!\!, ~\Psi = \begin{bmatrix}
			0.0045 & -0.0037
		\end{bmatrix}^{\! \top}\!\!\!\!, ~\hat{\kappa} = 3, ~ \varepsilon=1, \\ \alpha&=5.1859\times 10^{-4} , ~\rho=7.5529\times10^{-5}.
	\end{align*}
	
	\begin{figure}[t!]
		\centering
		\subfloat[\centering 30 trajectories with arbitrary initial conditions \label{ex1: traj}]{{\includegraphics[width=.4\linewidth]{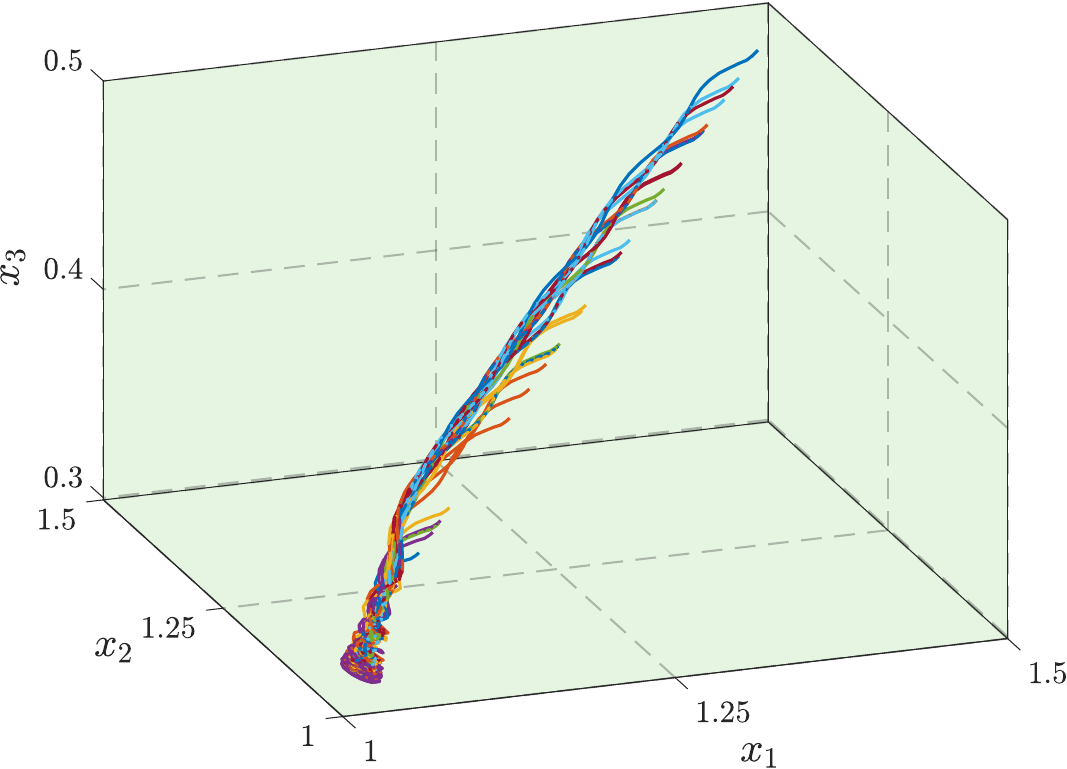} }}%
		\qquad\qquad
		\subfloat[\centering Errors between unknown system and its ROM \label{ex1: error}]{{\includegraphics[width=.4\linewidth]{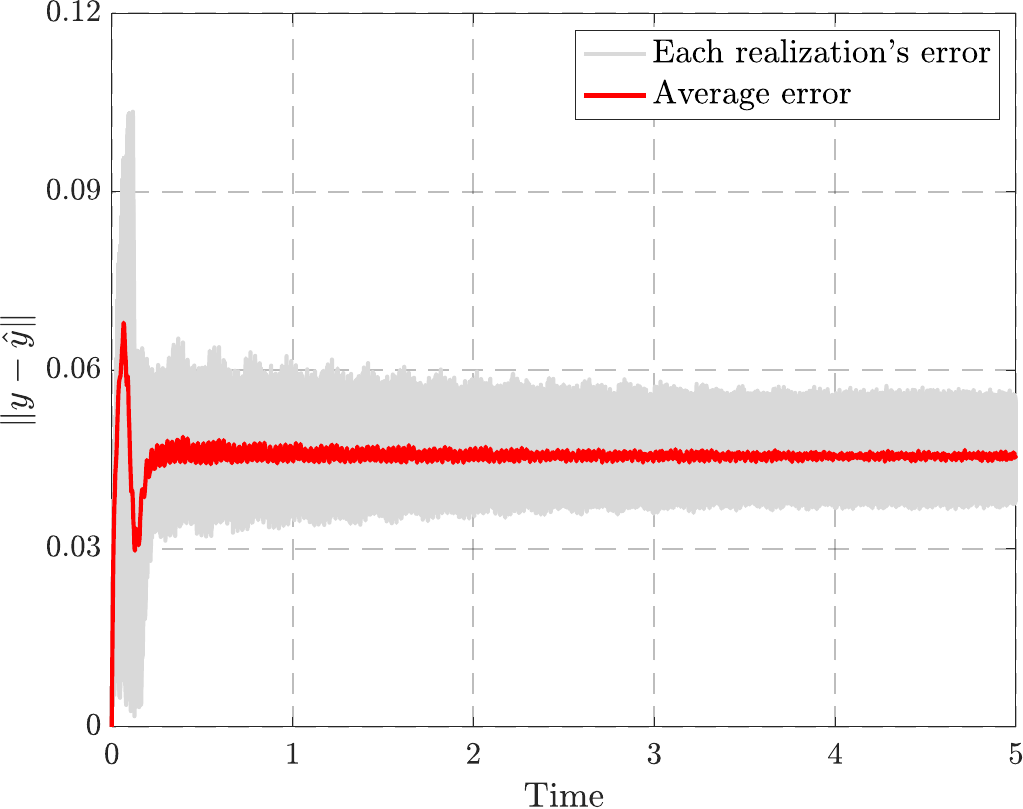} }}%
		\caption{(a) Closed-loop trajectories of the original system, starting from $30$ distinct initial conditions, together with (b) the closeness between trajectories of the \emph{motor system} and those of its corresponding ROM, as well as the mean of errors. As illustrated, all trajectories of the original unknown motor remain within the safe region under the designed ROM controller. \label{fig: example 1}}%
	\end{figure}
	
	We now synthesize a controller $\hat u \in [-10, 10]$ for the data-driven ROM using \texttt{SCOTS}~\cite{rungger2016scots} and refine it back to the original system using the interface map $u = u_{\hat u}(x, \hat x, \hat u)$ such that the first three states of the motor are forced to stay in a predetermined safe set $X = \left[1, \, 1.5\right] \times \left[1, \, 1.5\right]\times \left[0.3, \, 0.7\right]$. By considering $x_0 = \Theta \hat x_0$ and according to \eqref{new error}, the closeness between output trajectories of motor control system and its data-driven ROM is computed as $0.7282$. As can be seen in Figure~\ref{fig: example 1}, starting from $30$ different initial conditions, the original system's states are enforced to stay inside the safe set utilizing the controller solely synthesized for the ROM.
	
	\subsection{Spacecraft docking maneuvers} \label{subsec4.2}
	To showcase the efficacy of our data-driven framework for more complex logic properties like \emph{reach-while-avoid}, we utilize our results to solve a motion planning problem for spacecraft docking maneuvers~\cite{danielson2016path}. The system's matrices, which are assumed to be unknown to us, are as follows:
	\begin{align*}
		A = & \begin{bmatrix}
			0    &     0  &  1    &     0\\
			0    &     0    &     0  &  1\\
			-3.1623  &  0.0017 &  -4.0404  &  0.0022\\
			-0.0017 &  -3.1623 &  -0.0022 &  -4.0404
		\end{bmatrix}\!\!,\\
		B = & \begin{bmatrix}
			0  &   0  &   1  &   0\\
			0  &   0   &  0   &  1
		\end{bmatrix}^{\!\top}\!\!\!\!\!.
	\end{align*}
	
			\begin{figure}[t!]
		\centering
		\subfloat[\centering Trajectories of spacecraft and its ROM \label{ex2: comp}]{{\includegraphics[width=0.4\linewidth]{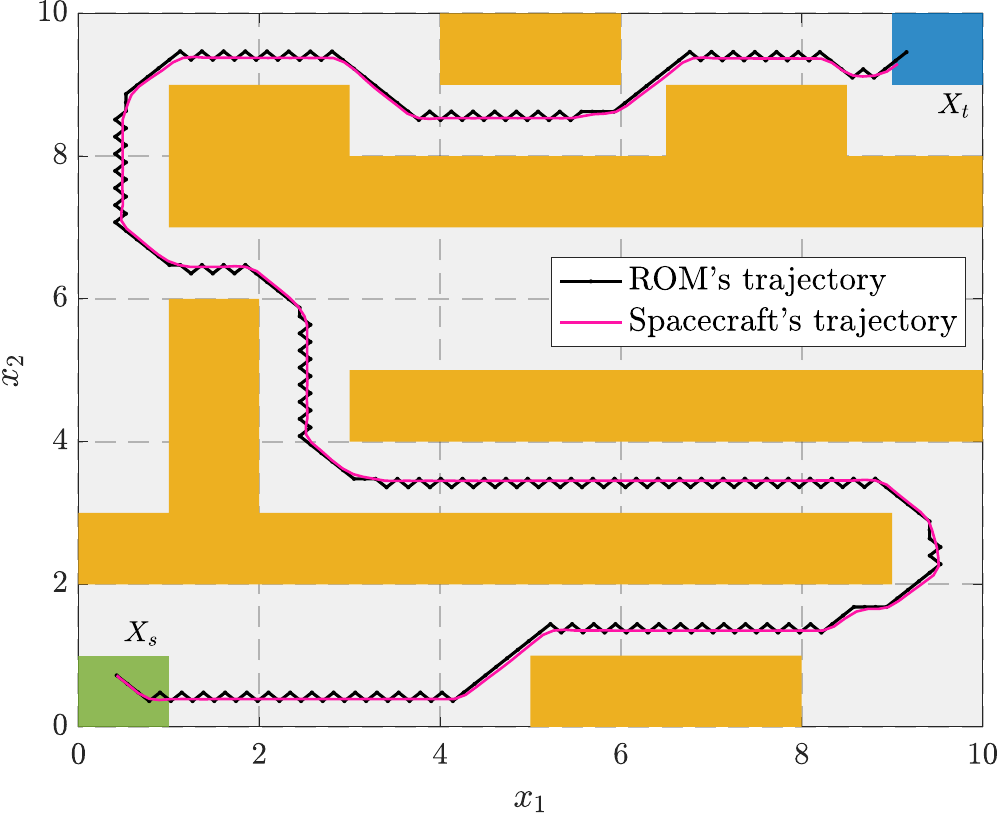} }}%
		\qquad\qquad
		\subfloat[\centering $50$ arbitrary trajectories of spacecraft \label{ex2: multi}]{{\includegraphics[width=0.4\linewidth]{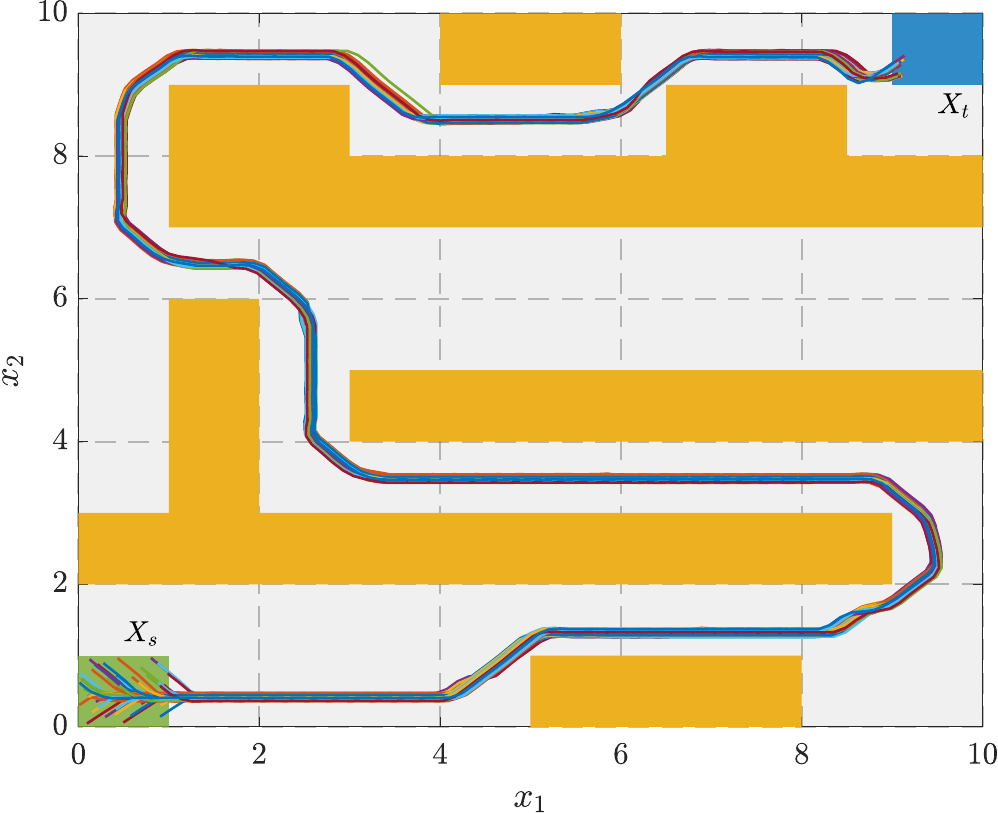} }}%
		\caption{Trajectories should start from the initial set $X_s$ \legendsquare{START!80} and reach the target set $X_t$ \legendsquare{TARGET!80} while avoiding debris \legendsquare{OBSTACLES!80} in space. As illustrated in (a), trajectories of the spacecraft and its ROM are very close to one another, satisfying the spacecraft's reach-while-avoid property. As shown in (b), the specification is properly fulfilled for $50$ arbitrary trajectories of the spacecraft under the designed ROM controller.\label{fig: example 2}}%
	\end{figure}
	
	For the reach-while-avoid property, the spacecraft's position should start from a specific initial set $X_s = [0,\, 1]\times [0,\, 1]$ and reach a predetermined target set $X_t = [9,\, 10]\times [9,\, 10]$ while avoiding debris in space. Our primary objective is to reduce the system's dimensions from 4 to 2, followed by designing a controller for the ROM that meets the reach-while-avoid specification for the spacecraft. After following the steps of Algorithm~\ref{Alg:1}, we have
	\begin{align*}
		P & = 10^{-6} \times \begin{bmatrix}
			0.4622 &  -0.0556  &  0.0762  & -0.0158\\
			-0.0556  &  0.4739  &  0.0003  &  0.0945\\
			0.0762  &  0.0003 &   0.0228&   -0.0017\\
			-0.0158  &  0.0945 &   -0.0017 &   0.0296
		\end{bmatrix}\!\!, \\ \Theta & = \begin{bmatrix}
			1 & 0\\
			0 & 1\\
			-0.0001 & 0\\
			0 & -0.0001
		\end{bmatrix}\!\!, ~\hat{A} =  \begin{bmatrix}
			-10^{-4} & 0\\ 0 & -10^{-4}
		\end{bmatrix}\!\!, ~\hat{B} = \begin{bmatrix}
			0.1 & 0\\ 0 & 0.1
		\end{bmatrix}\!\!,\\~
		\Xi & = \begin{bmatrix}
			3.1619 & -0.0017\\ 0.0017 & 3.1619
		\end{bmatrix}\!\!, ~\Psi = \begin{bmatrix}
			0.3313 & 0.0256\\ -0.0342 & 0.3203
		\end{bmatrix}\!\!,  ~\hat{\kappa}=5, ~\varepsilon = 1,\\
		\alpha &= 9.0420\times 10^{-9}, ~\rho=2.1718 \times 10^{-9}.
	\end{align*}
	Using the obtained parameters and by setting $x_0 = \Theta \hat{x}_0$, the closeness between the output trajectories of the system and its ROM is calculated as $0.3603$, according to \eqref{new error}.
	
	Leveraging the obtained ROM, we use \texttt{SCOTS} to design the desired controllers $\hat u_1, \hat u_2 \in [-6,6]$ for the ROM, followed by refining it back to the original system. The simulation results are provided in Figure~\ref{fig: example 2}, showcasing the practicality of our framework for such a complex motion planning task.
	
	\subsection{Human blood glucose metabolism}
	Controlling human blood glucose metabolism can be considered another case study of our work~\cite{sorensen1985physiologic}. The dynamic of this safety-critical application has seven states, including brain, heart, gut, lungs, kidney, and periphery (with two states in this dynamic), making it exceedingly difficult to design a formal controller that satisfies complex specifications. 
	The system's matrices, which are assumed to be unknown to us, are as follows:
	\begin{align*}
		A \!=\! \left[\!\!\!\begin{array}{ccccccc}
			-1.73 \!\!&\!\!   1.73   \!\!&\!\!        0     \!\!&\!\!      0  \!\!&\!\! 0  \!\!&\!\!  0     \!\!&\!\!     0\\
			0.45  \!\!&\!\!   -3.15    \!\!&\!\!       0   \!\!&\!\!   0.90  \!\!&\!\!  0.72 \!\!&\!\!  1.06     \!\!&\!\!      0\\
			0   \!\!&\!\!   0.76  \!\!&\!\!    -0.76    \!\!&\!\!       0  \!\!&\!\!  0  \!\!&\!\!  0  \!\!&\!\!  0\\
			0   \!\!&\!\!   0.08  \!\!&\!\!    0.32  \!\!&\!\!  -0.76  \!\!&\!\!  0 \!\!&\!\!  0  \!\!&\!\!  0\\
			0  \!\!&\!\!    1.41     \!\!&\!\!      0     \!\!&\!\!      0  \!\!&\!\!  1.19  \!\!&\!\!  0    \!\!&\!\!    0\\
			0  \!\!&\!\!    1.41    \!\!&\!\!       0     \!\!&\!\!      0  \!\!&\!\!  0  \!\!&\!\! -1.87  \!\!&\!\!    0.45\\
			0     \!\!&\!\!      0   \!\!&\!\!        0    \!\!&\!\!       0  \!\!&\!\! 0  \!\!&\!\!  0.05 \!\!&\!\! -0.46
		\end{array}\!\!\!\right]\!\!,   \quad B \!=\! \begin{bmatrix}
			0\\
			1\\
			0\\
			0\\
			0\\
			0\\
			0
		\end{bmatrix}\!\!.
	\end{align*}
	
	Our framework can remarkably alleviate this complexity by reducing the system's dimensions and designing the desired controller for the ROM. The main goal is to reduce the dimensions of the original system to 1, followed by synthesizing a controller, standing for the insulin injected into the body by an external insulin pump, that steers the insulin concentration in the human body (\emph{i.e.,} the summation of all states) to be between $[30,40]$.
	According to the steps of Algorithm \ref{Alg:1}, we design
	\begin{align*}
		P & = \left[\!\!\!\begin{array}{ccccccc}
			0.0036 & -0.0002 & -0.0038 & 0.0004 & -0.0020 & -0.0011  &  0.0662\\
			-0.0002 & 0.0006 & 0.0016 & 0.0003 & 0.0024 & 	-0.0016 &  -0.0453\\
			-0.0038 & 0.0016 & 0.0240 & 0.0053 & 0.0168 & -0.0173 &  -0.5736\\
			0.0004 & 0.0003 & 0.0053 & 0.0031 & 0.0031 &  -0.0051  & -0.1336\\
			-0.0020 & 0.0024 & 0.0168 & 0.0031 & 0.0201 &  -0.0153 &  -0.4722\\
			-0.0011 & -0.0016 & -0.0173 & -0.0051 & -0.0153 &  0.0189 &   0.4709\\
			0.0662 & -0.0453 & -0.5736 & -0.1336 & -0.4722 & 0.4709 &  14.6800
		\end{array}\right]\!\!,\\
		\Theta & = \left[\!\!\!\begin{array}{c}
			0.1553\\
			0.1552\\
			0.1554\\
			0.0832\\
			-0.1831\\
			0.1208\\
			0.0131\\
		\end{array}\!\!\!\right]\!\!, \qquad\begin{cases}
			\hat{A}=-0.001,\\
			\hat{B} = 1,\\
			\Xi = 0.3479,\\
			\Psi = -1.5911,\\
			\hat{\kappa}=1.5,\\
			\varepsilon = 1,\\
			\alpha = 1.4161 \times 10^{-4}.
		\end{cases}
	\end{align*}
	
	Now, similar to the first case study, one can readily synthesize a controller for the ROM such that $\hat x$ is forced to be between 60 and 80 (as the summation of $\Theta$ is approximately 0.5). In this way, the insulin concentration in the human body is steered to be between $[30,40]$.
	
	\subsection{Comparison with recent literature: Cart system}
	This case study is a cart system with a double-pendulum controller. This system has six states and one control input, with the system's matrices as
	\begin{align*}
		A &= \begin{bmatrix}
			0  &  1     &    0     &    0     &    0     &    0\\
			-1  & -1 &  19.6 &   1   &      0   &      0\\
			0     &    0    &     0  &  1    &     0     &    0\\
			1  &  1 &  -39.2 &  -2  &  9.8  &  1\\
			0     &    0     &    0     &    0     &    0  &  1\\
			0    &     0  & 19.6  &  1  & -19.6 &  -2
		\end{bmatrix}\!\!, \quad B = \begin{bmatrix}
			0\\
			1\\
			0\\
			-1\\
			0\\
			0
		\end{bmatrix}\!\!.
	\end{align*}
	Remark that we assume these matrices are unknown to us. Following the steps mentioned in Algorithm \ref{Alg:1}, we design
	\begin{align*}
		P &= \begin{bmatrix}
			7.2436  &  3.6926  &  3.2721  &  2.9688  &  2.3036  &  1.9298\\
			3.6926 &   2.5014 &   1.1149  &   1.9590 &   0.8084 &   1.2664\\
			3.2721 &   1.1149 &   2.6025  &  0.9937  &  1.7212   & 0.6680\\
			2.9688  &  1.9590 &   0.9937  &  1.5670   & 0.7002  &  1.0204\\
			2.3036 &   0.8084  &  1.7212  &  0.7002  &  1.4650   & 0.4834\\
			1.9298&    1.2664  &  0.6680 &   1.0204  &  0.4834  &  0.6772
		\end{bmatrix}\!\!,\\ \Theta & = \begin{bmatrix}
			0.0501\\
			-0.0001\\
			0\\
			0\\
			0\\
			0\\
		\end{bmatrix}\!\!, \qquad\begin{cases}
			\hat{A} = -0.0020,\\
			\hat{B} = 1,\\
			\Xi = 0.05,\\
			\Psi = 0.2409,\\
			\hat{\kappa} = 2,\\
			\varepsilon = 1,\\
			\alpha = 0.0047.
		\end{cases}
	\end{align*}
	
	Our primary aim is to apply our MOR findings to this system to compare our proposed data-driven method with the recent literature~\cite{ionescu2015two,burohman2023data}. For comparison, our framework reduces the system's dimension from 6 to 1, whereas the ROM in~\cite{ionescu2015two} reduces it to 2 (though it is not data-driven), and the data-driven ROM in~\cite{burohman2023data} reduces it to 3. This demonstrates the superior effectiveness of our framework. Additionally, the guarantee in~\cite{burohman2023data} is solely limited to the stability property, whereas our work offers a specification-free framework (cf. Remark \ref{Remark new}). This indicates that once the ROM is obtained from data, one can enforce complex specifications across the original system by satisfying them over its data-driven ROM.
	
	\subsection{High-dimensional system}
	To demonstrate the practicality of our method for high-dimensional systems, we consider a dynamical system with 25 states and two control inputs, \textit{i.e.,} $A\in\mathbb{R}^{25\times 25}$ and $B\in\mathbb{R}^{25\times 2}$. Directly synthesizing a controller for this size system using formal-method tools is infeasible. However, utilizing our framework, we reduce the system's dimensions to $2$, synthesize a controller for the ROM, and then refine it back to the original system to solve a \emph{reach-while-avoid} problem.
	More precisely, we aim for the trajectories of $x_1$ and $x_2$ to commence from the initial set $X_s = [4,\, 5]\times [0,\, 1]$ and be steered to the target set $X_t = [0,\, 1]\times [3.5,\, 6]$ without a collision transpiring with the obstacles. Figure \ref{fig:rwa25} depicts the satisfaction of the reach-while-avoid property for this high-dimensional system, while different errors of different runs are shown in Figure \ref{fig:error25}. Finally, the behavior of all the states of the system is illustrated in Figure \ref{fig:25states}.
	
	\begin{figure}[t!]
		\centering
		\includegraphics[width=0.4\linewidth]{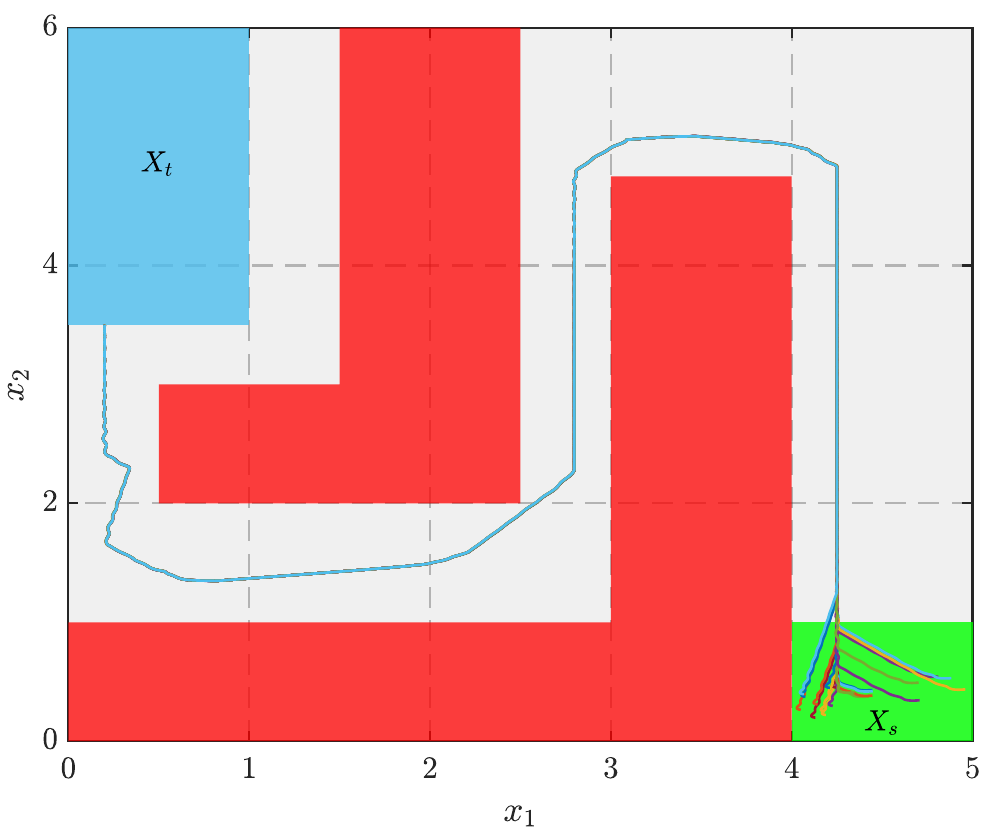}
		\caption{As the reach-while-avoid specification, trajectories of $x_1$ and $x_2$ should start from the initial set $X_s$ \legendsquare{START1!80} and reach the target set $X_t$ \legendsquare{TARGET1!80} without colliding with the obstacles \legendsquare{OBSTACLES1!75}. As seen, the specification is correctly fulfilled for $20$ arbitrary trajectories of the original system under the designed ROM controller, offering the practicality of our framework for high-dimensional systems.}
		\label{fig:rwa25}
	\end{figure}
	
	\begin{figure}[t!]
		\centering
		\includegraphics[width=0.4\linewidth]{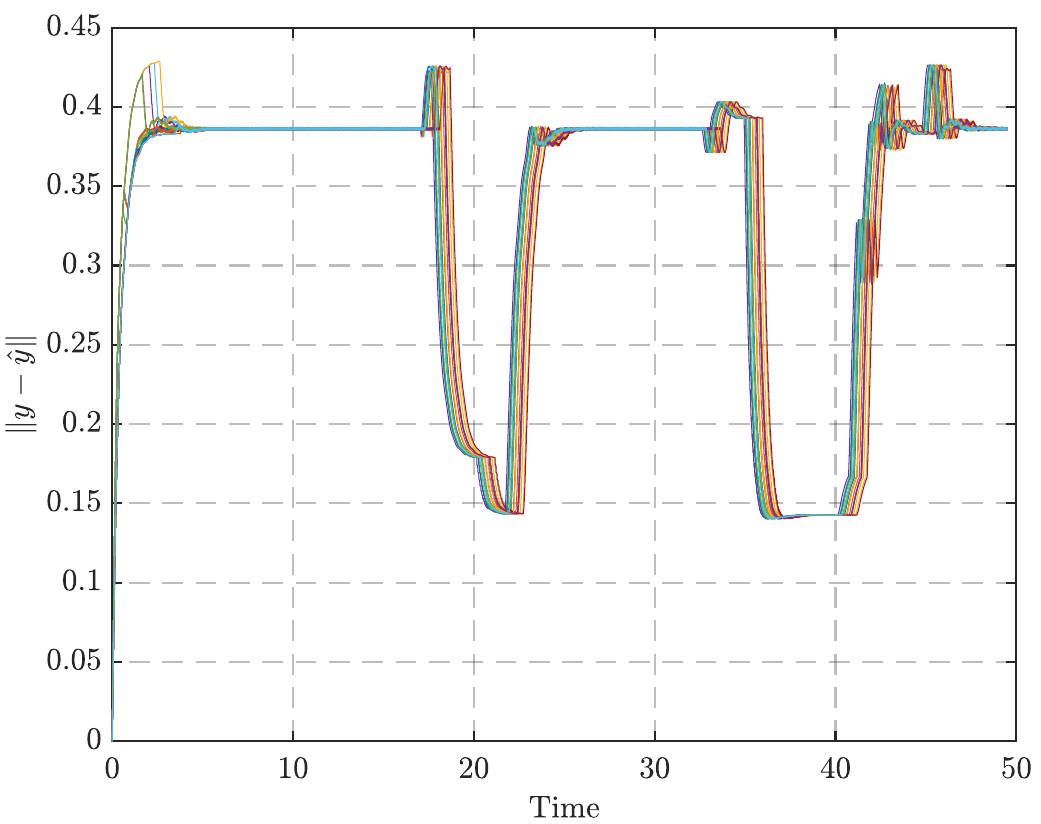}
		\caption{The quantified error bound for this example is $0.8609$ according to Theorem \ref{thm-J19}. As seen, all plotted errors are below the quantified upper bound $0.8609$, demonstrating the formality of our results.}
		\label{fig:error25}
	\end{figure}
	
	\begin{figure}[t!]
		\centering
		\includegraphics[width=0.4\linewidth]{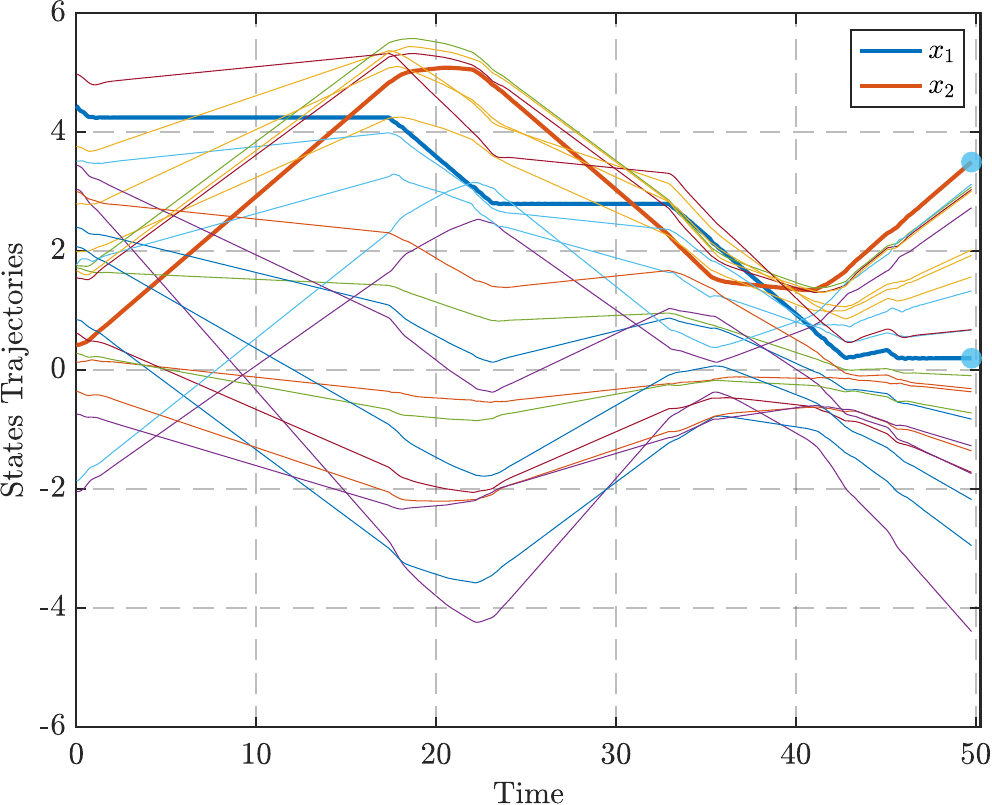}
		\caption{The behavior of all states while the reach-while-avoid is solved for the first and second states. As illustrated, there is no anomaly in the behavior of states, and the desired property is clearly satisfied, \textit{i.e.,} $x_1$ and $x_2$ reached the target
			{\scriptsize \legendcircle{TARGET1!80}}.
		}
		\label{fig:25states}
	\end{figure}
	
	\section{Conclusion}\label{conc}
	In this work, we introduced a certified data-driven scheme to construct reduced-order models of dynamical systems with unknown mathematical models. Our methodology leveraged data to establish similarity relations between output trajectories of unknown systems and their data-driven ROMs using simulation functions, which formally quantify their closeness. To achieve this, under a readily fulfillable rank condition using data, only two input-state trajectories from unknown systems were collected to construct both ROMs and SFs while providing correctness guarantees. By leveraging a series of benchmarks, we demonstrated that the proposed ROMs derived from data can be used for controller synthesis, ensuring high-level logical properties over unknown dynamical models. Constructing ROMs from data for a \emph{class of nonlinear systems}, as in \eqref{Nonlinear}, is under investigation as future work.

\bibliographystyle{alpha}
\bibliography{biblio}

\begin{thebibliography}{MWRDM05}

\bibitem[AMP16]{2016Murat}
M.~Arcak, C.~Meissen, and A.~Packard.
\newblock {\em Networks of dissipative systems}.
\newblock SpringerBriefs in Electrical and Computer Engineering. Springer,
  2016.

\bibitem[Ant05]{antoulas2005approximation}
A.~C. Antoulas.
\newblock {\em Approximation of large-scale dynamical systems}.
\newblock SIAM, 2005.

\bibitem[Ast10]{astolfi2010model}
A.~Astolfi.
\newblock Model reduction by moment matching for linear and nonlinear systems.
\newblock {\em IEEE Transactions on Automatic Control}, 55(10):2321--2336,
  2010.

\bibitem[BBSC20]{burohman2020data}
A.~M. Burohman, B.~Besselink, J.~Scherpen, and M.~K. Camlibel.
\newblock From data to reduced-order models via moment matching.
\newblock {\em arXiv:2011.00150}, 2020.

\bibitem[BBSC23]{burohman2023data}
A.~M. Burohman, B.~Besselink, J.~MA. Scherpen, and M.~K. Camlibel.
\newblock From data to reduced-order models via generalized balanced
  truncation.
\newblock {\em IEEE Transactions on Automatic Control}, 2023.

\bibitem[BD95]{bhatia1995cauchy}
R.~Bhatia and C.~Davis.
\newblock A {C}auchy-{S}chwarz inequality for operators with applications.
\newblock {\em Linear Algebra and its Applications}, 223:119--129, 1995.

\bibitem[BD11]{benner2011lyapunov}
P.~Benner and T.~Damm.
\newblock Lyapunov equations, energy functionals, and model order reduction of
  bilinear and stochastic systems.
\newblock {\em SIAM Journal on Control and Optimization}, 49(2):686--711, 2011.

\bibitem[BGS07]{boots2007constraint}
B.~Boots, G.~J. Gordon, and S.~Siddiqi.
\newblock A constraint generation approach to learning stable linear dynamical
  systems.
\newblock {\em Advances in Neural Information Processing Systems}, 20, 2007.

\bibitem[BK08]{baier2008principles}
C.~Baier and J.-P. Katoen.
\newblock {\em Principles of model checking}.
\newblock MIT press, 2008.

\bibitem[DP22]{drmavc2022learning}
Z.~Drma{\v{c}} and B.~Peherstorfer.
\newblock Learning low-dimensional dynamical-system models from noisy
  frequency-response data with {L}oewner rational interpolation.
\newblock In {\em Realization and Model Reduction of Dynamical Systems: A
  Festschrift in Honor of the 70th Birthday of Thanos Antoulas}, pages 39--57.
  Springer, 2022.

\bibitem[DWBDC16]{danielson2016path}
C.~Danielson, A.~Weiss, K.~Berntorp, and S.~Di~Cairano.
\newblock Path planning using positive invariant sets.
\newblock In {\em Proceedings of the 55th IEEE Conference on Decision and
  Control (CDC)}, pages 5986--5991. IEEE, 2016.

\bibitem[Enn84]{enns1984model}
D.~F. Enns.
\newblock Model reduction with balanced realizations: An error bound and a
  frequency weighted generalization.
\newblock In {\em Proceedings of 23rd IEEE Conference on Decision and Control
  (CDC)}, pages 127--132. IEEE, 1984.

\bibitem[FF95]{feldmann1995efficient}
P.~Feldmann and R.~W. Freund.
\newblock Efficient linear circuit analysis by {P}ad{\'e} approximation via the
  {L}anczos process.
\newblock {\em IEEE Transactions on Computer-Aided Design of Integrated
  Circuits and Systems}, 14(5):639--649, 1995.

\bibitem[GDPT21]{guo2021data}
M.~Guo, C.~De~Persis, and P.~Tesi.
\newblock Data-driven stabilization of nonlinear polynomial systems with noisy
  data.
\newblock {\em IEEE Transactions on Automatic Control}, 67(8):4210--4217, 2021.

\bibitem[GDPT22]{guo2022data}
M.~Guo, C.~De~Persis, and P.~Tesi.
\newblock Data-driven stabilizer design and closed-loop analysis of general
  nonlinear systems via {T}aylor's expansion.
\newblock {\em arXiv:2209.01071}, 2022.

\bibitem[Glo84]{glover1984all}
K.~Glover.
\newblock All optimal {H}ankel-norm approximations of linear multivariable
  systems and their $\mathcal{L}_\infty$-error bounds.
\newblock {\em International Journal of Control}, 39(6):1115--1193, 1984.

\bibitem[GP09]{girard2009hierarchical}
A.~Girard and G.~J. Pappas.
\newblock Hierarchical control system design using approximate simulation.
\newblock {\em Automatica}, 45(2):566--571, 2009.

\bibitem[Gri97]{grimme1997krylov}
E.~J. Grimme.
\newblock {\em Krylov projection methods for model reduction}.
\newblock University of Illinois at Urbana-Champaign, 1997.

\bibitem[HW13]{Hou2013model}
Z.~Hou and Z.~Wang.
\newblock From model-based control to data-driven control: Survey,
  classification and perspective.
\newblock {\em Information Sciences}, 235:3--35, 2013.

\bibitem[Ion15]{ionescu2015two}
T.~C. Ionescu.
\newblock Two-sided time-domain moment matching for linear systems.
\newblock {\em IEEE Transactions on Automatic Control}, 61(9):2632--2637, 2015.

\bibitem[KM19]{kolter2019learning}
J.~Z. Kolter and G.~Manek.
\newblock Learning stable deep dynamics models.
\newblock {\em Advances in Neural Information Processing Systems}, 32, 2019.

\bibitem[LSAZ22]{lavaei2022automated}
A.~Lavaei, S.~Soudjani, A.~Abate, and M.~Zamani.
\newblock Automated verification and synthesis of stochastic hybrid systems: A
  survey.
\newblock {\em Automatica}, 146, 2022.

\bibitem[LSMZ17]{lavaei2017compositional}
A.~Lavaei, S.~Soudjani, R.~Majumdar, and M.~Zamani.
\newblock Compositional abstractions of interconnected discrete-time stochastic
  control systems.
\newblock In {\em IEEE 56th Annual Conference on Decision and Control (CDC)},
  pages 3551--3556, 2017.

\bibitem[LSZ19]{lavaei2019compositional}
A.~Lavaei, S.~Soudjani, and M.~Zamani.
\newblock Compositional construction of infinite abstractions for networks of
  stochastic control systems.
\newblock {\em Automatica}, 107:125--137, 2019.

\bibitem[LSZ20]{lavaei2020compositional}
A.~Lavaei, S.~Soudjani, and M.~Zamani.
\newblock Compositional (in) finite abstractions for large-scale interconnected
  stochastic systems.
\newblock {\em IEEE Transactions on Automatic Control}, 65(12):5280--5295,
  2020.

\bibitem[Moo81]{moore1981principal}
B.~Moore.
\newblock Principal component analysis in linear systems: Controllability,
  observability, and model reduction.
\newblock {\em IEEE Transactions on Automatic Control}, 26(1):17--32, 1981.

\bibitem[MWRDM05]{markovsky2005algorithms}
I.~Markovsky, J.~C. Willems, P.~Rapisarda, and B.~LM. De~Moor.
\newblock Algorithms for deterministic balanced subspace identification.
\newblock {\em Automatica}, 41(5):755--766, 2005.

\bibitem[PA15]{padoan2015towards}
A.~Padoan and A.~Astolfi.
\newblock Towards deterministic subspace identification for autonomous
  nonlinear systems.
\newblock In {\em Proceedings of the 54th IEEE Conference on Decision and
  Control (CDC)}, pages 127--132. IEEE, 2015.

\bibitem[PGW17]{peherstorfer2017data}
B.~Peherstorfer, S.~Gugercin, and K.~Willcox.
\newblock Data-driven reduced model construction with time-domain {L}oewner
  models.
\newblock {\em SIAM Journal on Scientific Computing}, 39(5):A2152--A2178, 2017.

\bibitem[Pnu77]{pnueli1977temporal}
A.~Pnueli.
\newblock The temporal logic of programs.
\newblock In {\em Proceedings of 18th Annual Symposium on Foundations of
  Computer Science}, pages 46--57. IEEE, 1977.

\bibitem[ROF92]{rudin1992nonlinear}
L.~I. Rudin, S.~Osher, and E.~Fatemi.
\newblock Nonlinear total variation based noise removal algorithms.
\newblock {\em Physica D: Nonlinear Phenomena}, 60(1-4):259--268, 1992.

\bibitem[RT11]{rapisarda2011identification}
P.~Rapisarda and H.~L. Trentelman.
\newblock Identification and data-driven model reduction of state-space
  representations of lossless and dissipative systems from noise-free data.
\newblock {\em Automatica}, 47(8):1721--1728, 2011.

\bibitem[RZ16]{rungger2016scots}
M.~Rungger and M.~Zamani.
\newblock {SCOTS}: A tool for the synthesis of symbolic controllers.
\newblock In {\em Proceedings of the 19th ACM International Conference on
  Hybrid Systems: Computation and Control}, pages 99--104, 2016.

\bibitem[Sor85]{sorensen1985physiologic}
J.~T. Sorensen.
\newblock {\em A physiologic model of glucose metabolism in man and its use to
  design and assess improved insulin therapies for diabetes}.
\newblock PhD thesis, Massachusetts Institute of Technology, 1985.

\bibitem[Stu99]{sturm1999using}
J.~F. Sturm.
\newblock Using {S}e{D}u{M}i 1.02, a {MATLAB} toolbox for optimization over
  symmetric cones.
\newblock {\em Optimization Methods and Software}, 11(1-4):625--653, 1999.

\bibitem[Tab09]{tabuada2009verification}
P.~Tabuada.
\newblock {\em Verification and control of hybrid systems: a symbolic
  approach}.
\newblock Springer Science \& Business Media, 2009.

\bibitem[TNJ16]{tran2016large}
H.~D. Tran, L.~V. Nguyen, and T.~T. Johnson.
\newblock Large-scale linear systems from order-reduction (benchmark proposal).
\newblock In {\em Proceedings of the 3rd Applied Verification for Continuous
  and Hybrid Systems Workshop (ARCH), Vienna, Austria}, 2016.

\bibitem[You12]{young1912classes}
W.~H. Young.
\newblock On classes of summable functions and their {F}ourier series.
\newblock {\em Proceedings of the Royal Society of London A: Mathematical,
  Physical and Engineering Sciences}, 87(594):225--229, 1912.

\bibitem[ZA17]{zamani2017compositional}
M.~Zamani and M.~Arcak.
\newblock Compositional abstraction for networks of control systems: A
  dissipativity approach.
\newblock {\em IEEE Transactions on Control of Network Systems},
  5(3):1003--1015, 2017.

\bibitem[ZLZC23]{zhong2023automata}
B.~Zhong, A.~Lavaei, M.~Zamani, and M.~Caccamo.
\newblock Automata-based controller synthesis for stochastic systems: A game
  framework via approximate probabilistic relations.
\newblock {\em Automatica}, 147, 2023.

\bibitem[ZQDSL22]{zhou2022neural}
R.~Zhou, T.~Quartz, H.~De~Sterck, and J.~Liu.
\newblock Neural {L}yapunov control of unknown nonlinear systems with stability
  guarantees.
\newblock {\em Advances in Neural Information Processing Systems},
  35:29113--29125, 2022.

\end{thebibliography}

\end{document}